\def\wbar{\accentset{{\cc@style\underline{\mskip8mu}}}}
\def\pd#1#2{\frac{\partial #1}{\partial #2}}
\renewcommand{\vec}[1]{\mbox{\boldmath \small $#1$}}
\def\mi{\mathtt{i}}
\def\me{\mathrm{e}} 
\newcommand{\pp}[2]{\frac{\partial{#1}}{\partial{#2}}}
\newcommand{\dd}[2]{\frac{\dif{#1}}{\dif{#2}}}
\def\disp{\displaystyle}
\def\dif{\mathrm{d}}
\newcommand{\diag}{\mathrm{diag}}
\newcommand{\imag}{\mathrm{Im}}
\newcommand{\real}{\mathrm{Re}}
\newtheorem{Theorem}{Theorem}
\newtheorem{Proposition}[Theorem]{Proposition}
\numberwithin{Theorem}{section}
\newtheorem{Lemma}[Theorem]{Lemma}
\newtheorem{Remark}{Remark}
\numberwithin{Remark}{section}
\theoremstyle{definition}
\newtheorem{Example}{Example}[section]
\numberwithin{equation}{section}
\newcommand{\videpost}{\textit{vide post}{}}
\newcommand{\etc}{\textit{etc}{}}
\newcommand{\ie}{\textit{i.e.}{~}}
\newcommand{\eg}{\textit{e.g.}{~}}
\newcommand{\vgamma}{\vec{\gamma}}
\newcommand{\vGamma}{\vec{\Gamma}}
\newcommand{\vPsi}{\vec{\Psi}}
\newcommand{\vPhi}{\vec{\Phi}}
\newcommand{\vpsi}{\vec{\psi}}
\newcommand{\vK}{\vec{K}}
\newcommand{\vE}{{E}}
\newcommand{\vI}{\vec{I}}
\newcommand{\vdelta}{{\delta}}
\newcommand{\vell}{{\ell}}
\journal{Journal of Computational Physics}
\begin{document}

\begin{frontmatter}

\title{Numerical methods for nonlinear Dirac equation}
\author[ss]{Jian Xu}
\ead{xujian764@gmail.com}

\author[ss]{Sihong Shao\corref{cor}}
\ead{sihong@math.pku.edu.cn}

\author[thz]{Huazhong Tang}
\ead{hztang@math.pku.edu.cn}

\cortext[cor]{Corresponding author.}

\address[ss]{LMAM and School of Mathematical Sciences, Peking University,
Beijing 100871, China}
\address[thz]{HEDPS, CAPT \& LMAM, School of Mathematical Sciences, Peking University,
Beijing 100871, China}
\begin{abstract}
This paper presents a review of the current state-of-the-art of numerical methods for
nonlinear Dirac (NLD) equation.
Several methods are extendedly proposed for the (1+1)-dimensional NLD equation
with the scalar and vector self-interaction and analyzed in the way of the
accuracy and the time reversibility
as well as the conservation of the discrete charge, energy and linear momentum.
Those methods are the Crank-Nicolson (CN) schemes,
the linearized CN schemes, the odd-even hopscotch scheme, the leapfrog scheme,
a semi-implicit finite difference scheme, and the exponential operator splitting (OS) schemes.
The nonlinear subproblems resulted from the OS schemes are analytically solved
by fully exploiting the local conservation laws of the NLD equation.
The effectiveness of the various numerical methods,
with special focus on the error growth and the computational cost,
is illustrated on two numerical experiments,
compared to two high-order accurate Runge-Kutta discontinuous Galerkin methods.
Theoretical and numerical comparisons show that the high-order accurate OS schemes
may compete well with other numerical schemes discussed here in terms of the accuracy
and the efficiency.
A fourth-order accurate OS scheme is further applied to investigating the interaction dynamics
of the NLD solitary waves under the scalar and vector self-interaction.
The results show that the interaction dynamics
of two NLD solitary waves   depend on the exponent power of the self-interaction in the NLD equation;
collapse happens after collision of two equal one-humped NLD solitary waves
under the cubic vector self-interaction in contrast to no collapse scattering for corresponding quadric case.
\end{abstract}
\begin{keyword}
nonlinear Dirac equation \sep
solitary wave \sep
interaction dynamics \sep
finite difference method \sep
operator splitting method

\end{keyword}
\end{frontmatter}

\section{Introduction}
\label{sec:intro}

As a relativistic wave equation,
the Dirac equation provides naturally a description of an electron \cite{Dirac1928}.
Following Dirac's discovery of the linear equation of the electron,
there appears the fundamental idea of nonlinear description of an elementary spin-1/2 particle
which makes it possible to take into account its self-interaction \cite{Ivanenko1938,FinkelsteinLelevierRuderman1951,FinkelsteinFronsdalKaus1956}.
Heisenberg put forward
the idea to use a nonlinear Dirac (NLD) equation
as a possible basis model for a unified field theory \cite{Heisenberg1957}.
A key feature of the NLD equation is that
it allows solitary wave solutions or particle-like solutions --- the stable localized solutions
with finite energy and charge \cite{Ranada1983}.
That is, the particles appear as intense localized regions of field
which can be recognized as the basic ingredient
in the description of extended objects in quantum field theory \cite{Weyl1950}.
Different self-interactions give rise to different NLD models
mainly including
the Thirring model \cite{Thirring1958},
the Soler model \cite{Soler1970},
the Gross-Neveu model \cite{GrossNeveu1974}
(equivalent to the massless Soler model),
and the bag model \cite{Mathieu1984}
(\ie the solitary waves with finite compact support),
all of which attracted wide interest of physicists
and mathematicians around the 1970s and 1980s,
especially on looking for the solitary wave solutions
and investigating the related physical and mathematical properties \cite{Ranada1983}.

For the NLD equation in (1+1) dimensions (\ie one time dimension plus one space dimension),
several analytical solitary wave solutions are derived in \cite{LeeKuoGavrielides1975,ChangEllisLee1975}
for the quadric nonlinearity, \cite{Mathieu1985-prd} for fractional nonlinearity as well as
\cite{Stubbe1986-jmp,CooperKhareMihailaSaxena2010} for general nonlinearity
by using explicitly the constraint resulting from energy-momentum conservation,
and summarized by Mathieu \cite{Mathieu1985-jpa-mg}. In contrast,
there are few explicit solutions in (1+3) dimensions except for some
particular cases shown in \cite{Werle1977-lmp} in spite of
their existence claimed by mathematicians for various situations
\cite{Vazquez1977,Werle1981-appb,MathieuMorris1984,CazenaveVazquez1986,Merle1988-jde,BalabaneCazenaveDouadyMerle1988,BalabaneCazenaveVazquez1990,EstebanSere1995}
(the readers are referred to an overview \cite{EstebanSere2002} on this topic),
and most understanding is based on
numerical investigations, \eg \cite{Rafelski1977-plb,Takahashi1979-jmp,Alvarez1985}.
Beyond this, the study of the NLD equation in (1+1) dimensions could be very helpful for that in (1+3) dimensions
since the (1+1)-dimensional NLD equation correspond to the asymptotic form of the equation in
the physically interesting case of (1+3) dimensions as emphasized by Kaus \cite{Kaus1976}.
That is, some qualitative properties of the NLD solitary waves could be similar in such two cases.
An interesting topic for the NLD equation is the stability issue, which has been the central topic in works
spread out over several decades in an effort that is still ongoing.
Analytical studies of the NLD solitary wave stability face serious obstacles \cite{StraussVazquez1986,AlvarezSoler1986,BlanchardStubbeVazquez1987},
while results of computer simulations are contradictory \cite{Bogolubsky1979-pla,AlvarezSoler1983,Mathieu1983,Alvarez1985}.
The stability analysis of the NLD solitary waves is still a very challenging mathematical problem to date \cite{Barashenkov1998,CooperKhareMihailaSaxena2010}.
Recent efforts in this direction can be found in \cite{Chugunova2006,Pelinovsky2010,Haddad2011-epl,Comech2011,Boussaid2012,Berkolaiko2012}.
Another rising mathematical interest related to the NLD equation is the analysis of global well-posedness,
\eg see \cite{Pelinovsky2010-arxiv,Bournaveas2012} and references therein.

In the case of that
theoretical methods were not able to give the satisfactory results,
numerical methods were used for obtaining
the solitary wave solutions of the NLD equation
as well as for investigating the stability.
An important step in this direction was made by
Alvarez and Carreras in 1981 \cite{AlvarezCarreras1981},
who simulated the interaction dynamics
between the (1+1)-dimensional NLD solitary waves of
different initial charge for the Soler model \cite{Soler1970} by
using a second-order accurate Crank-Nicholson (CN) scheme \cite{AlvarezKuoVazquez1983}.
They first saw there: charge and energy interchange except for some particular
initial velocities of the solitary waves; inelastic interaction in binary collisions;
and oscillating state production from binary collisions.
Motivated by their work, Shao and Tang revisited this interaction dynamics problem in 2005 \cite{ShaoTang2005}
by employing a fourth-order accurate Runge-Kutta discontinuous Galerkin (RKDG) method \cite{ShaoTang2006}.
They not only recovered the phenomena mentioned by Alvarez and Carreras but also revealed
several new ones, \eg
collapse in binary and ternary collisions of two-humped NLD solitary waves \cite{ShaoTang2005};
a long-lived oscillating state formed with an approximate constant frequency in collisions of two standing waves \cite{ShaoTang2006};
full repulsion in binary and ternary collisions of out-of-phase waves \cite{ShaoTang2008}.
Their numerical results also inferred that the two-humped profile could undermine
the stability during the scattering of the NLD solitary waves.
Note in passing that the two-humped profile was first pointed out by Shao and Tang \cite{ShaoTang2005}
and later gotten noticed by other researchers \cite{CooperKhareMihailaSaxena2010}.
Besides the often-used CN \cite{AlvarezKuoVazquez1983}
and RKDG methods \cite{ShaoTang2006},
there exist many other numerical schemes for solving the (1+1)-dimensional
NLD equation:
split-step spectral schemes \cite{FrutosSanz-Serna1989},
the linearized CN scheme \cite{Alvarez1992},
the semi-implicit scheme \cite{Jimcnez1994}\cite{Bournaveas2012},
Legendre rational spectral methods \cite{WangGuo2004},
multi-symplectic Runge-Kutta methods \cite{HongLi2006},
adaptive mesh methods \cite{WangTang2007} \etc.
The fourth-order accurate RKDG method \cite{ShaoTang2006}
is very appropriate for investigating the interaction dynamics of the NLD solitary waves
due to their ability to capture the discontinuous or strong gradients without producing spurious oscillations,
and thus performs better than the second-order accurate CN scheme \cite{AlvarezKuoVazquez1983}.
However, the high cost due to the relatively more freedoms used in each cell
and the stringent time step constraint
reduce its practicality
in more realistic simulations where realtime and quantitative results are required.

Recently, there has been a remarkable upsurge
of the interest in the NLD models,
as they emerge naturally as practical models
in other physical systems,
such as the gap solitons in nonlinear optics \cite{Barashenkov1998},
Bose-Einstein condensates in honeycomb optical lattices \cite{Haddad2009},
as well as matter influencing
the evolution of the Universe in cosmology \cite{Saha2012}.
In view of such new trend,
longtime stable, efficient, conservative and high-order accurate numerical methods
for solving the NLD equation are highly desirable.
Finite difference methods, usually as the first try in practice,
enable easy coding and debugging and thus are often used by physicists and engineers.
However, detailed discussion and careful comparison on finite difference solvers for the NLD equation
are not existed.
To this end, the present work as the first step will extendedly propose
the finite deference schemes for solving the NLD equation with the scalar and vector self-interaction.
A general and precise comparison among them will be presented.
However, all of these finite difference methods are often of the second order accuracy and thus
sustain fast error growth with respect to time. To achieve relatively slow error growth,
high-order accurate numerical methods are required.
By exploiting the local conservation laws of the NLD equation,
we present exponential operator splitting (OS) schemes which are time reversible and
can conserve the discrete charge. One of the high-order accurate OS schemes is afterwards
adopted to investigate the interaction dynamics for
the NLD solitary waves under the general scalar and vector self-interaction.
It should be noted that
the experiments carried out in the literatures
are all limited to the collisions of
the NLD solitary waves under the quadric scalar self-interaction.
Here, the binary collisions of the NLD solitary waves under the cubic scalar self-interaction
 or under the vector self-interaction or under the linear combination of the scalar and vector self-interactions
are all studied for the first time.

The paper is organized as follows. There is a brief review of the NLD equation in
Section \ref{sec2:gnld} and the solitary wave solutions  are also derived there for
the general scalar and vector self-interaction.
The numerical schemes are presented in Section \ref{sec:DiffSch} and
corresponding numerical analysis is given
in Section \ref{sec:discussion4fdm}. The numerical results are presented with discussion in Section \ref{sec:discussion4fdm}.
The paper is concluded in Section \ref{sec:conclusion} with a few remarks.

\section{Preliminaries and notations}
\label{sec2:gnld}
This section will introduce the (1+1)-dimensional NLD equation
with the scalar and vector self-interaction and derive its two solitary wave solutions.
Throughout the paper, units in which both the speed of light and the reduced Planck constant
are equal to one will be used.

\subsection{Nonlinear Dirac equation}

Our attention is restricted to the NLD equation in $(1+1)$ dimensions
which can be written in the covariant form
\begin{equation}
(\mi\vgamma^\mu\partial_\mu-m)\vPsi+\partial L_\text{I}[\vPsi]/\partial \wbar{\vPsi}=0, \label{generalnld}
\end{equation}
where $\vPsi$ is the spinor with  two complex components,
$\wbar{\vPsi}:=\vPsi^\dag\vgamma^0$ denotes the adjoint spinor,
$\vPsi^\dag$ is the  complex conjugate transpose of $\vPsi$,
$L_\text{I}[\vPsi]$ denotes the self-interaction Lagrangian,
the Greek index $\mu$  runs from 0 to 1,
the Einstein summation convection has been applied,
$\mi$ is the imaginary unit,
$m$  is the rest mass,
$\partial_\mu=\pp{}{x^\mu}$ stands for the covariant derivative,
and $\vgamma^\mu$, for $\mu=0,1$, are the gamma matrices or the Dirac matrices, chosen
as those in \cite{AlvarezCarreras1981,ShaoTang2005},
$$
\vgamma^0=\begin{pmatrix} 1& 0\\ 0 &-1\end{pmatrix},\quad
\vgamma^1=\begin{pmatrix} 0& 1\\ -1 &0\end{pmatrix}.
$$
In fact, Eq. \eqref{generalnld} is the equation of motion for the classical spinorial particle
with the Lagrangian  being a sum of  the Dirac Lagrangian and the self-interaction Lagrangian, \ie
\begin{equation}
L[\vPsi]=\wbar{\vPsi} (\mi
\vgamma^\mu\partial_\mu-m)\vPsi+{L}_\text{I}[\vec\Psi]. \label{total_lag}
\end{equation}
There exist several different NLD models in the literature,
where two important models in (1+1) dimensions are the scalar self-interaction of Soler \cite{Soler1970}
\begin{equation}\label{self-scalar1}
L_{\text{s}}[\vPsi]:=\wbar{\vPsi}\vPsi,
\end{equation}
and the vector self-interaction of Thirring \cite{Coleman1975}
\begin{equation}\label{self-vector1}
L_{\text{v}}[\vPsi]:= \wbar{\vPsi}\vgamma^\mu\vPsi\wbar{\vPsi}\vgamma_\mu\vPsi,
\end{equation}
where $\vgamma_\mu=\eta_{\mu\nu}\vgamma^\nu$ with the Minkowski metric
$\eta_{\mu\nu}=\text{diag}(1,-1)$ on spacetime,
which implies $\vgamma_\mu=(-1)^\mu\vgamma^\mu$.

This paper will focus on the NLD equation \eqref{generalnld}
with a more general self-interaction \cite{Stubbe1986-jmp,NogamiToyama1992}
\begin{equation}\label{generalLI}
{L}_\text{I}[\vPsi] = s \left(L_{\text{s}}[\vPsi]\right)^{k+1}
        + v \left(L_{\text{v}}[\vPsi]\right)^{(k+1)/2},
\end{equation}
and extendedly propose and compare its numerical methods,
where $s$ and $v$ are two real numbers and $k$ is a positive real number.
If the spinor $\vPsi$ is scaled by a constant factor as $\vPsi^\prime=\sqrt{\alpha}\vPsi$
with $\alpha\in\mathbb{C}$,
then the scaled self-interaction Lagrangian will be
$\alpha^{k+1}L_\text{I}[\vPsi]$ which shows that the power exponent to $\alpha$ is $k+1$.
In such sense, we call that the self-interaction Lagrangian $L_\text{I}$ has the power exponent $k+1$
\cite{Mathieu1985-prd,Stubbe1986-jmp,CooperKhareMihailaSaxena2010}.
Hereafter the \textit{quadric} and \textit{cubic} self-interaction will be referred to the case $k=1$
and the case $k=2$, respectively.

The  self-interaction \eqref{generalLI} implies
the so-called homogeneity relation \cite{Mathieu1985-jpa-mg,Stubbe1986-jmp}
\begin{equation*}
\wbar{\vPsi}\pp{ {L}_\text{I}[\vPsi]}{\wbar{\vPsi}}=(k+1){L}_\text{I}[\vPsi]. \label{ppL=L}
\end{equation*}
Combining it with the definition of the Lagrangian $L[\vPsi]$ and \eqref{generalnld}
gives
\begin{equation} \label{LLIrelation}
 {L}[\vPsi] = -k  {L}_\text{I}[\vPsi].
\end{equation}

For the NLD equation \eqref{generalnld} with  \eqref{generalLI},
one may still verify the following local conservation laws for the current vector $J_\mu$
and the energy-momentum tensor $T_{\mu\nu}$:
\begin{equation}
\partial^\mu J_\mu  =0, \quad
\partial^\mu T_{\mu\nu}  =0, \label{Tconserve}
\end{equation}
where
\begin{equation*}
J_\mu  = \wbar{\vPsi}\vgamma_\mu\vPsi,\quad
T_{\mu\nu} =\mi\wbar{\vPsi}\vgamma_\mu\partial_\nu\vPsi-\eta_{\mu\nu}{L}[\vPsi].
\end{equation*}

For localized solitary waves $\vPsi=(\psi_1,\psi_2)^T$,
one may derive a direct corollary of \eqref{Tconserve},
\ie the following global conservation laws \cite{ShaoTang2006}.
\begin{Proposition}\label{prop-cl}
Assume that $\disp \lim_{|x|\rightarrow +\infty}|\psi_i(x,t)|=0$ and
$\disp \lim_{|x|\rightarrow +\infty}|\partial_x{\psi_i}(x,t)|<+\infty$
hold uniformly for $t\geq 0$ and $i=1,2$. The total energy $E$, the total linear momentum
$P$, and the total charge $Q$, defined  respectively by
\begin{equation}\label{qep-infty}
E(t):= \int_{-\infty}^{+\infty} T_{00} \dif x, \quad
P(t):= \int_{-\infty}^{+\infty} T_{01} \dif x, \quad
Q(t):= \int_{-\infty}^{+\infty} J_0 \dif x,
\end{equation}
satisfy
$$ \frac{\dif
}{\dif t}E(t)=0, \quad \frac{\dif}{\dif t}P(t)=0, \quad \frac{\dif}{\dif t}Q(t)=0.$$
\end{Proposition}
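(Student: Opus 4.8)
The plan is to obtain each of the three conservation laws directly from the corresponding local law in \eqref{Tconserve} by integrating in $x$ and checking that the boundary contributions at $x=\pm\infty$ drop out under the stated decay. First I would spell out the four-divergences in $(1+1)$-dimensional components with $\eta_{\mu\nu}=\diag(1,-1)$, so that $\partial^0=\partial_t$ and $\partial^1=-\partial_x$. For the charge, $\partial^\mu J_\mu=0$ reads $\partial_t J_0-\partial_x J_1=0$, whence
\begin{equation*}
\frac{\dif}{\dif t}Q(t)=\int_{-\infty}^{+\infty}\partial_t J_0\,\dif x=\int_{-\infty}^{+\infty}\partial_x J_1\,\dif x=\Big[\,J_1\,\Big]_{x=-\infty}^{x=+\infty},
\end{equation*}
once the interchange of $\dif/\dif t$ and $\int\dif x$ is justified (routine, using the uniform-in-$t$ hypotheses together with whatever regularity of $\vPsi$ makes \eqref{Tconserve} hold pointwise). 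Since $J_1=\wbar{\vPsi}\vgamma_1\vPsi$ is bilinear in the components $\psi_1,\psi_2$, the assumption $\lim_{|x|\to+\infty}|\psi_i(x,t)|=0$ forces $J_1\to0$ at both ends, the boundary term vanishes, and $\dif Q/\dif t=0$.

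The momentum goes the same way. Taking $\nu=1$ in $\partial^\mu T_{\mu\nu}=0$ gives $\partial_t T_{01}-\partial_x T_{11}=0$, hence $\dif P/\dif t=[\,T_{11}\,]_{x=-\infty}^{x=+\infty}$, where $T_{11}=\mi\wbar{\vPsi}\vgamma_1\partial_x\vPsi+{L}[\vPsi]$. The first term is bilinear in $\vPsi$ (which $\to0$) and $\partial_x\vPsi$ (which the hypothesis keeps bounded near $x=\pm\infty$), so it $\to0$; the second term also $\to0$ because, by \eqref{LLIrelation} and \eqref{generalLI}, ${L}[\vPsi]=-k{L}_\text{I}[\vPsi]$ is built from the invariants $\wbar{\vPsi}\vPsi$ and $\wbar{\vPsi}\vgamma^\mu\vPsi\,\wbar{\vPsi}\vgamma_\mu\vPsi$ raised to positive powers and hence vanishes as $\vPsi\to0$. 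Thus $\dif P/\dif t=0$.

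For the energy the same computation with $\nu=0$ gives $\partial_t T_{00}-\partial_x T_{10}=0$ and $\dif E/\dif t=[\,T_{10}\,]_{x=-\infty}^{x=+\infty}$, but now $T_{10}=\mi\wbar{\vPsi}\vgamma_1\partial_t\vPsi$ contains the time derivative $\partial_t\vPsi$, which is not directly controlled by the hypotheses (only $\partial_x\psi_i$ is assumed bounded). The key step is to remove $\partial_t\vPsi$ using the equation of motion \eqref{generalnld}: left-multiplying by $\vgamma^0$ and using $(\vgamma^0)^2=I$ yields $\mi\,\partial_t\vPsi=-\mi\,\vgamma^0\vgamma^1\partial_x\vPsi+m\,\vgamma^0\vPsi-\vgamma^0\,\partial{L}_\text{I}[\vPsi]/\partial\wbar{\vPsi}$, so near $x=\pm\infty$ the quantity $\partial_t\vPsi$ is a sum of the bounded term $\partial_x\vPsi$ and terms that vanish with $\vPsi$, hence is bounded. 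Since $T_{10}$ is then bilinear in $\vPsi\to0$ against a bounded factor, the boundary term vanishes and $\dif E/\dif t=0$. I expect this substitution --- trading the \emph{a priori} uncontrolled $\partial_t\vPsi$ for a bounded expression via the PDE --- to be the only genuine obstacle; the rest is the standard ``integrate the local conservation law, discard the boundary term'' pattern plus the routine justification of differentiation under the integral sign.
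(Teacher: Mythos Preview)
Your argument is correct and is exactly the standard derivation one expects here. The paper itself does not spell out a proof: it simply presents the proposition as ``a direct corollary of \eqref{Tconserve}'' and refers to \cite{ShaoTang2006}. Your write-up---integrate the local law in $x$, reduce $\dif/\dif t$ of each global quantity to a boundary term, and show that term vanishes using the decay of $\vPsi$---is precisely what underlies that ``direct corollary'' claim. The only nontrivial step you flag, namely trading $\partial_t\vPsi$ in $T_{10}$ for $\partial_x\vPsi$ and lower-order terms via the equation of motion so that the hypothesis on $\partial_x\psi_i$ suffices, is the right way to close the energy case and is implicit in the paper's one-line justification.
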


The properties \eqref {LLIrelation} and \eqref{Tconserve} will be also exploited to find the  solitary wave solutions of the
 (1+1)-dimensional NLD equation \eqref{generalnld} with $L_\text{I}$ given in \eqref{generalLI}
 in the next subsection.

\subsection{Standing wave solution}
\label{appA:derivation}
This subsection will derive the standing wave solutions of
the (1+1)-dimensional NLD equation \eqref{generalnld}
with the self-interaction \eqref{generalLI} in the spirit of the technique used in \cite{LeeKuoGavrielides1975,ChangEllisLee1975,Mathieu1985-jpa-mg,Stubbe1986-jmp}.
The solution $\vPsi=(\psi_1,\psi_2)^T$
of the  NLD equation \eqref{generalnld} with $L_\text{I}$ in \eqref{generalLI},
having the form
\begin{equation*}
\psi_1(x,t)=\me^{-\mi \omega t}\varphi(x),\quad \psi_2(x,t)=\me^{-\mi \omega t}\chi(x),
\end{equation*}
is wanted, where $m>\omega\geq 0$,
and $|\varphi(x)|$ and $|\chi(x)|$ are assumed to decay exponentially as $|x|\rightarrow +\infty$ or
have the finite compact support.
For such  solution, it is not difficult to verify
that both the Lagrangian $L[\vPsi]$ and the energy-momentum tensor $T_{\mu\nu}$
are independent of the time $t$, because
\begin{equation}\label{total_lag_ti}
\begin{aligned}
{L}[\vPsi] &\equiv \omega \wbar{\vpsi}\vgamma^0\vpsi+\mi\wbar{\vpsi}\vgamma^1\vpsi_x-m\wbar{\vpsi}\vpsi
+{L}_\text{I}[\vPsi], %
\\
T_{00} &\equiv-\mi \wbar{\vpsi}\vgamma^1\vpsi_x+m\wbar{\vpsi}\vpsi- {L}_\text{I}[\vPsi],
\quad T_{01}\equiv \mi\wbar{\vpsi}\vgamma^0\vpsi_x,  \\
T_{10} &\equiv-\omega \wbar{\vpsi}\vgamma^1\vpsi, \quad T_{11}\equiv -\mi \wbar{\vpsi}\vgamma^1\vpsi_x+{L}[\vPsi],
\end{aligned}
\end{equation}
where $\vec \psi(x)=\big(\varphi(x), \chi(x)\big)^T$.
Using the conservation laws \eqref{Tconserve} further  gives
\begin{equation}
T_{10}=-\omega \wbar{\vpsi}\vgamma^1\vpsi =0, \quad 
T_{11}=-\mi\wbar{\vpsi}\vgamma^1\vpsi_x+ {L}[\vPsi] =0. \label{t11}
\end{equation}
The first equation implies that $\varphi^\ast\chi$ is imaginary  because
\begin{equation*}
\wbar{\vpsi}\vgamma^1\vpsi
=\varphi^\ast\chi+\varphi\chi^\ast=0.
\end{equation*}
Thus, without loss of generality, we may assume that $\varphi(x)$ is real and $\chi(x)$ is imaginary,
and they are in the form
\begin{equation}
\vpsi(x)=\left(
\begin{matrix}
\varphi(x)\\
\chi(x)
\end{matrix}
\right)=
R(x) \left( \begin{matrix}
  \cos\big(\theta(x)\big) \\
  \mi \sin\big(\theta(x)\big)\end{matrix}\right),\label{phaseexp}
\end{equation}
where both $R(x)$ and $\theta(x)$ are pending real functions, and $R(x)$ is
assumed to satisfy the inequality $Q(t)\equiv \int_{-\infty}^{+\infty} R^2(x)\dif x<+\infty$.
On the other hand,  combining the first equation in \eqref{total_lag_ti} with the second equation in \eqref{t11}
yields
\begin{equation}\label{key-Li}
\omega \vpsi^\dag\vpsi- m\wbar{\vpsi}\vpsi +  L_\text{I}[\vPsi]=0,
\end{equation}
which becomes for \eqref{phaseexp}
\begin{equation}
  \label{eq:Li-2}
   R^2 \big(\omega- m  \cos(2\theta)\big) + R^{2k+2}\big(s\cos^{k+1}(2\theta) + v\big) = 0.
\end{equation}
Combining \eqref{key-Li} with \eqref{LLIrelation} leads to
\begin{equation*}
k \omega \vpsi^\dag\vpsi-k m\wbar{\vpsi}\vpsi-\mi\wbar{\vpsi}\vgamma^1\vpsi_x=0,
\label{key-eq1}
\end{equation*}
which reduces to for \eqref{phaseexp}
\begin{equation} \label{theta}
\dd{\theta}{x}=-k
\big(\omega - m\cos (2\theta)\big).
\end{equation}
Because
\begin{equation*}
\int_{0}^{\theta} \frac{\dif \tilde{\theta}}{-k\omega+km \cos (2 \tilde{\theta})}
={\frac{1}{
   \sqrt{k^2(m^2-\omega^2)}} \tanh
   ^{-1}\left(\frac{k(m+\omega)}{\sqrt{k^2(m^2-\omega^2)}}\tan
   \left(\theta\right)\right)},
\end{equation*}
for $\theta \in \big(-0.5 \cos^{-1}(\omega/m), 0.5 \cos^{-1}(\omega/m)\big)\subset(-\pi/2,\pi/2)$ when $m>\omega\geq 0$,
the solution of \eqref{theta} may be derived  as follows:
\begin{equation}
\theta(x) = \tan^{-1} \left(  \sqrt{\frac{m-\omega}{m+\omega}}  \tanh\Big(k  \sqrt{m^2-\omega^2} x \Big)\right),
\label{theta-1}
\end{equation}
for initial data $\theta(0) =0$ and $k>0$. In fact,
under the previous assumption, one may verify  $\theta(x)\in (-\pi/4,\pi/4)$.
{If coefficients $s$ and $v$ in \eqref{generalLI}  belong to the set
$\{v\geq 0, s>-v\}$} for $m>\omega>0$, or $\{v>0, s>-v\}$ for $m>\omega=0$,
 then from Eq. \eqref{eq:Li-2} one has  non-trivial  $R(x)$ for the localized solution as follows
\begin{equation}\label{R2}
R(x) = \pm \left(\frac{m\cos\left(2\theta(x)\right)-\omega}
  {s \cos^{k+1}\left(2\theta(x)\right) + v }\right)^{{1}/{2k}}.
\end{equation}

Hereto, the standing wave solution of the
NLD equation \eqref{generalnld} with \eqref{generalLI} has been derived,
and will be denoted as follows
\begin{equation}
\vPsi^\text{sw}(x,t) = \left(
    \begin{array}{c}
      \psi^\text{sw}_1(x,t) \\ \psi^\text{sw}_2(x,t)
    \end{array}\right)
    = \me^{-\mi \omega t} R(x)\left(
      \begin{array}{c}
        \cos\big(\theta(x)\big) \\ \mi \sin\big(\theta(x)\big)
      \end{array} \right),   \label{eq:bound-state}
\end{equation}
where $\theta(x)$ and $R(x)$ are given in Eqs. \eqref{theta-1} and \eqref{R2}, respectively.
This solution represents a solitary wave with zero velocity and
contains some special cases in the literature \eg
\cite{AlvarezCarreras1981,CooperKhareMihailaSaxena2010}.

\begin{Remark}\rm
It has been pointed out in \cite{ShaoTang2005}
that the profile of the charge density $J_0(x,t)$
for the standing wave \eqref{eq:bound-state} under the scalar self-interaction (\ie $s\neq 0$ and $v=0$)
with $k=1$
can be either one-humped or two-humped,
which is also recently confirmed for any $k>0$ by other researchers
in \cite{CooperKhareMihailaSaxena2010}.
They further pointed out there that
the profile can only be one-humped for any $k>0$ in the case of $s=0$ and $v\neq 0$.
For the linear combined self-interaction \eqref{generalLI} with $s\neq 0$ and $v\neq0$,
we find that the profile can also be one-humped (see Figs.~\ref{fig.os-dg-rhoq} and \ref{fig:k1-sv}
where the charge density is denoted by $\rho_Q(x,t)$ for convenience)
or two-humped (see Fig.~\ref{fig:k1-2humped-sv}).
\end{Remark}

\subsection{Solitary wave solution with nonzero velocity}
\label{boostforMW}

This subsection will derive another solution of the (1+1)-dimensional NLD equation \eqref{generalnld} with the self-interaction \eqref{generalLI} by using the Lorentz covariance of the NLD equation. Consider a frame F with an observer O
and coordinates $(x,t)$. The observer O
describes a particle by the wavefunction $\vPsi(x,t)$
which obeys the NLD equation \eqref{generalnld} with $L_\text{I}$ given in \eqref{generalLI}, \ie
\begin{equation}\label{nld-lorentz}
\left(\mi \vgamma^0\pd{}{t}+\mi \vgamma^1\pd{}{x}-m\right)\vPsi(x,t)
+\left(\partial L_\text{I}[\vPsi]/\partial \wbar{\vPsi}\right)(x,t)=0.
\end{equation}
In another inertial frame F$'$ with an observer O$'$ and coordinates $(x',t')$ given by
 the Lorentz transformation  with a translation in the $x$-direction
\begin{align}\label{boost}
x'=\gamma \big((x-x_0)-Vt\big), \quad t'=\gamma\big(t-V(x-x_0)\big),
\end{align}
which is called ``boost'' in the $x$--direction,
where $x_0$ is  any given  position,
$V$ is the relative velocity between frames in the $x$-direction,
and $\gamma=1/\sqrt{1-V^2} $ is the Lorentz factor. 
According to the relativity principle,
the observer O$'$ describes the same particle by  $\vPsi'(x',t')$ which should also satisfy
\begin{equation}\label{nld-prime}
\left(\mi \vgamma^0\pd{}{t'}+\mi \vgamma^1\pd{}{x'}-m\right)\vPsi'(x',t')
+\left(\partial L_\text{I}'[\vPsi']/\partial \wbar{\vPsi}'\right)(x',t')=0.
\end{equation}
Using some algebraic manipulations, the ``transformation'' matrix $\vec{S}$
may be found as follows
\begin{align}\label{boost2b}
    \vec{S}
    = \left(
    \begin{matrix}
      \sqrt{\frac{\gamma+1}{2}} & \text{sign}(V)\sqrt{\frac{\gamma-1}{2}} \\
      \text{sign}(V)\sqrt{\frac{\gamma-1}{2}} & \sqrt{\frac{\gamma+1}{2}}
    \end{matrix}
    \right),
\end{align}
which takes $\vPsi(x,t)$ to $\vPsi'(x',t')$ under the Lorentz transformation
\eqref{boost}, \ie
\begin{align} \label{boost2}
\vPsi'(x',t^\prime) = \vec{S} \vPsi(x,t),
\end{align}
where   $\text{sign} (x)$ is the sign function
which returns $1$ if $x>0$, $0$ if $x=0$,
and $-1$ if $x<0$.

Applying the transformation \eqref{boost2} with \eqref{boost2b} to
 the standing wave solution \eqref{eq:bound-state}  gives
another solution of
the NLD equation \eqref{generalnld}-\eqref{generalLI},
\ie the moving wave solution
\begin{align}
\vPsi^\text{mw}(x-x_0,t) = &
\left(
\begin{matrix}
\sqrt{\frac{\gamma+1}{2}} \psi^\text{sw}_1(x',t')  + \mathrm{sign}(V)\sqrt{\frac{\gamma-1}{2}} \psi^\text{sw}_2(x',t')\\
\mathrm{sign}(V)\sqrt{\frac{\gamma-1}{2}} \psi^\text{sw}_1(x',t') + \sqrt{\frac{\gamma+1}{2}} \psi^\text{sw}_2(x',t')
\end{matrix}
\right). \label{mw}
\end{align}
This solution represents a NLD solitary wave with velocity $V$
and will return to the standing wave \eqref{eq:bound-state} if setting $V=0$ and $x_0=0$.

\subsection{Time reversibility}

This subsection will show that the NLD equation \eqref{generalnld} with $L_\text{I}$ given in \eqref{generalLI}
remains invariant under the time reversal operation
\begin{equation} \label{TimeR}
x^\prime = x, \quad t^\prime = -t, \quad
\vPsi^\prime(x^\prime,t^\prime) = \vK \vPsi(x,t), \quad \vK=\vgamma^0 \mathcal{C}, 
\end{equation}
where $\mathcal{C}$ denotes the complex conjugate operation on $\vPsi(x,t)$, \ie $\vPsi^\ast(x,t)= \mathcal{C}\vPsi(x,t)$,
the time-reversal operator $\vK$ satisfies
\begin{equation}\label{k-relation}
\vK^\dag\vK=\vI, \quad \vK\vgamma^0 = \vgamma^0 \vK, \quad  \vK\vgamma^1 = - \vgamma^1 \vK,
\end{equation}
due to the anticommutation relation $\{\vgamma^\mu,\vgamma^\nu\}=2 \eta^{\mu\nu}\vI$,
and $\vI$ is the unit matrix.
From the relations \eqref{k-relation},
it can be easily verified that
\begin{equation}\label{tilde-relation}
\begin{aligned}
 (\wbar{\vPsi}^\prime \vPsi^\prime)(x^\prime,t^\prime)  &= (\wbar{{\vPsi}} {\vPsi})(x,t), \\
 (\wbar{\vPsi}^\prime \vgamma^0 \vPsi^\prime)(x^\prime,t^\prime)  &= (\wbar{{\vPsi}} \vgamma^0 {\vPsi})(x,t), \\
 (\wbar{\vPsi}^\prime \vgamma^1 \vPsi^\prime)(x^\prime,t^\prime)  &= - (\wbar{{\vPsi}} \vgamma^1 {\vPsi})(x,t),
\end{aligned}
\end{equation}
so that the self-interaction Lagrangian in \eqref{generalLI} satisfies
\begin{equation}\label{tr-L-relation}
L_\text{I}^\prime[\vPsi^\prime](x^\prime,t^\prime) = L_\text{I}[\vPsi](x,t),
\end{equation}
under the time reversal transformation \eqref{TimeR}.

Applying the time-reversal operator $\vK$ to the NLD equation \eqref{nld-lorentz} and using
the definition \eqref{TimeR} as well as the relations \eqref{k-relation} and \eqref{tr-L-relation}
lead to an equation which has the same form as shown in \eqref{nld-prime}.
That is, if a spinor $\vPsi(x,t)$ satisfies the NLD equation \eqref{nld-lorentz},
then the transformed spinor $\vPsi^\prime(x^\prime,t^\prime)$
by the time reversal operation \eqref{TimeR} will also
satisfy the same NLD equation,
\ie the NLD equation \eqref{nld-lorentz} is time reversible under the operation \eqref{TimeR}.

\section{Numerical methods}
\label{sec:DiffSch}

As we mentioned in Section \ref{sec:intro},
some numerical methods have been well developed for
the NLD equation with a scalar or vector self-interaction.
This section will  extendedly present and compare several numerical methods
for solving
the (1+1)-dimensional NLD equation \eqref{eq:gnld}
with the general scalar and vector self-interaction \eqref{generalLI}.
Their numerical analyses
will be presented in Section \ref{sec:discussion4fdm}.

For the sake of convenience, divide the bounded spatial domain $\Omega\subset \mathbb R$ into
a uniform partition $\{x_j | x_j=jh \in \Omega, j\in \mathbb Z\}$ with a constant stepsize $h>0$
and give a grid in time
$\{t_n = n\tau, n=0,1,\cdots\}$ with a time stepsize $\tau>0$,
and recast the NLD equation \eqref{nld-lorentz} into
\begin{equation}
  \pp{\vPsi}{t} +\vgamma^0\vgamma^1 \pp{\vPsi}{ x}
  + \mi m \vgamma^0 \vPsi -  \mi f_{\text{s}} \vgamma^0\vPsi
- \mi f_{\text{v}}
 \wbar{\vPsi}\vgamma_\mu\vPsi \vgamma^0\vgamma^\mu\vPsi
= 0,  \label{eq:gnld}
\end{equation}
where
\begin{equation*}
f_{\text{s}}  := s (k+1)w_{\text{s}}^{k}, \quad
w_{\text{s}} := \wbar{\vPsi}\vPsi, \quad
f_{\text{v}}  := v (k+1) w_{\text{v}}^{(k-1)/2},\quad
w_{\text{v}} := \wbar{\vPsi}\vgamma^\nu\vPsi \wbar{\vPsi}\vgamma_\nu\vPsi, \quad
\end{equation*}
all of which are real functions, and the dependence of $\vPsi(x,t)$ on $(x,t)$ is implied.

\subsection{Several finite difference schemes}

Use $\vec \Psi_j^n$ to denote  approximation of $\vec \Psi(x_j,t_n)$ and
define the forward,  backward and centered difference operators in space and time  by:
\begin{equation}\label{eq:note-sch}
\begin{aligned}
\vdelta_t^\pm =\pm (\vE_t^{\pm 1} -I)/\tau,  \quad \vdelta_t^0 =  (\vE_t -\vE_t^{-1})/2\tau,
\\
\vdelta_x^\pm =\pm (\vE_x^{\pm 1} -I)/h,\quad \vdelta_x^0 =  (\vE_x -\vE_x^{-1})/2h,
\end{aligned}
\end{equation}
where $I$ is the identity operator,
and $\vE_t$ and $\vE_x$ are the translation operators in time and space, respectively, defined by
$$
\vE_t \vPsi^n_j:=\vPsi^{n+1}_j,\quad \vE_x \vPsi^n_j:=\vPsi^n_{j+1},
$$
whose inverses exist and are defined by
$$
\vE_t^{-1} \vPsi^n_j:=\vPsi^{n-1}_j,\quad \vE_x^{-1} \vPsi_j^n:=\vPsi_{j-1}^n.
$$
Besides,  several symbols are also introduced for arithmetic averaging operators:
\begin{equation*}
\begin{aligned}
      \vell_t^\pm\vPsi_j^n &:=  \frac12(\vPsi_j^{n\pm1} + \vPsi_j^n),\quad
      \vell_t^0 \vPsi_j^n :=  \frac12(\vPsi_j^{n+1} +\vPsi_j^{n-1}),\\
      \vell_x^\pm \vPsi_j^{n} &:= \frac1{2}(\vPsi_{j\pm 1}^n +\vPsi_{j}^n),\quad
      \vell_x^0 \vPsi_j^{n} := \frac1{2}(\vPsi_{j+1}^n + \vPsi_{j-1}^n),
\end{aligned}
\end{equation*}
and for an extrapolation operator:
\begin{equation*}
        \vell_t^\text{e}\vPsi_j^{n} = \frac32\vPsi_j^n - \frac12\vPsi_j^{n-1}.
\label{eq:note-sch-sum}
\end{equation*}

\textbf{$\bullet$ Crank-Nicolson schemes}
The CN scheme and its linearized version have been studied in \cite{AlvarezKuoVazquez1983,Alvarez1992,Jimcnez1994}
for the NLD equation with the quadric scalar self-interaction.
For the system \eqref{eq:gnld}, the extension of the CN scheme in \cite{Jimcnez1994,ChangJiaSun1999} may be written as
\begin{align}
\begin{aligned}
\vdelta_t^+\vPsi_j^n +  \vgamma^0\vgamma^1 \vell_t^+\vdelta_x^0 \vPsi_j^n
    &+ \mi m \vgamma^0 \vell_t^+\vPsi_j^{n} -
    \mi \frac{\vdelta_t^+(F_{\text{s}})_j^n}
    {\vdelta_t^+(w_{\text{s}})_j^n}\vgamma^0\vell_t^+\vPsi_j^n \\
    & -  \mi \frac{\vdelta_t^+(F_{\text{v}})_j^n}
    {\vdelta_t^+(w_{\text{v}})_j^n}
    \vell_t^+(\wbar{\vPsi}\vgamma_\mu\vPsi)_j^n\vgamma^0\vgamma^\mu \vell_t^+\vPsi_j^n
    = 0,
\end{aligned} \label{eq:cn-gnld}
\end{align}
by approximating \eqref{eq:gnld} at point $(x_{j},t_{n+\frac12})$ with compact
central difference  quotient in place of the partial derivative,
where
$$F_I(w_I) := \int_0^{w_I} f_I(x) \dif x, \quad  I =\text{s}, \text{v}.$$
The above CN scheme (named as {\tt CN} hereafter) is fully implicit and forms a nonlinear algebraic system.
In practice, the linearization technique is often used to
overcome difficulty in directly solving such nonlinear algebraic system.
Two linearization techniques \cite{ChangJiaSun1999} for numerical methods of the nonlinear
Schr\"{o}dinger equation are borrowed here.
The first linearized CN scheme we consider is using wholly the extrapolation
technique to the nonlinear self-interaction terms in \eqref{eq:gnld}
\begin{equation}  \label{eq:lcn1-gnld}
     \delta_t^+\vPsi_j^{n} + \vgamma^0\vgamma^1\ell_t^+\delta_x^0\vPsi_j^n
    + \mi m \vgamma^0 \ell_t^+\vPsi_j^{n}
    - \mi \ell_t^\text{e}\left(   f_{\text{s}}\vgamma^0\vPsi
      +   f_{\text{v}} \wbar{\vPsi}\vgamma_\mu\vPsi\vgamma^0\vgamma^\mu\vPsi \right)_j^n = 0,
\end{equation}
which   will be called  by {\tt LCN1}.
The second linearized CN scheme, denoted  by {\tt LCN2}, is
\begin{equation}
\begin{aligned}  \label{eq:lcn2-gnld}
      \delta_t^+\vPsi_j^{n} +  \vgamma^0\vgamma^1\ell_t^+\delta_x^0\vPsi_j^n
    & + \mi m \vgamma^0 \ell_t^+\vPsi_j^{n}
    -   \mi \ell_t^\text{e}(f_{\text{s}})_j^n\vgamma^0\ell_t^+\vPsi_j^n \\
       & -  \mi \ell_t^\text{e}(f_{\text{v}}\wbar{\vPsi}\vgamma_\mu\vPsi)_j^n\vgamma^0\vgamma^\mu \ell_t^+\vPsi_j^n = 0,
\end{aligned}
\end{equation}
by  partially applying the extrapolation technique to the nonlinear self-interaction terms. 
It is worth noting that the above linearized CN schemes are not linearized version of the {\tt CN} scheme \eqref{eq:cn-gnld}.
The {\tt LCN2} scheme may conserve the charge  and behaves better than the {\tt LCN1} scheme (\videpost).

\begin{Remark}\rm
For the (1+1)-dimensional NLD equation \eqref{generalnld}
with the quadric scalar self-interaction Lagrangian \eqref{self-scalar1}, 
the CN scheme (named by {\tt CN0})
proposed in \cite{AlvarezKuoVazquez1983} is
\begin{equation}
  \label{eq:cn-kuo}
      \delta_t^+\vPsi_j^n + \vgamma^0\vgamma^1\ell_t^+\delta_x^0\vPsi_j^n
    + \mi m \vgamma^0 \ell_t^+\vPsi_j^{n} - 2 {\mi}  s
    (\ell_t^+\wbar{\vPsi}_j^n\ell_t^+\vPsi_j^n)
    \vgamma^0\ell_t^+\vPsi_j^n = 0,
\end{equation}
 and its linearized version (called by {\tt LCN0}) is given in \cite{Alvarez1992} as follows
\begin{equation}
\begin{aligned}  \label{eq:lcn-kuo}
        \delta_t^+\vPsi_j^n & + \vgamma^0\vgamma^1\ell_t^+\delta_x^0\vPsi_j^n
    + \mi m \vgamma^0 \ell_t^+\vPsi_j^{n} \\
    & - 2 {\mi} s \left((\wbar{\vPsi}\vPsi)_j^n
     - \tau\real( \wbar{\vPsi}_j^n\vgamma^0\vgamma^1\delta_x^0\vPsi_j^n)\right)
    \vgamma^0\ell_t^+\vPsi_j^n = 0.
\end{aligned}
\end{equation}
We will show in Section \ref{sec:discussion4fdm} that the {\tt CN},  {\tt CN0}
and {\tt LCN0} schemes conserve the charge and the {\tt CN} scheme further conserves the energy.
\end{Remark}

\textbf{$\bullet$ Odd-even hopscotch scheme} The odd-even hopscotch scheme is a numerical integration technique
for time-dependent partial differential equations,
see \cite{Gordon1965}. Its key point is  that the forward Euler-central difference scheme
is used for the odd grid points while at the even points the backward Euler-central
difference scheme is recovered. Thus the odd-even hopscotch scheme may be explicitly
implemented.
Such scheme applied to the system \eqref{eq:gnld} becomes
\begin{equation}
\begin{aligned}  \label{eq:hops-odd}
     \delta_t^+\vPsi_j^{n} & + \vgamma^0\vgamma^1\delta_x^0\vPsi_j^{n}
    + \mi m \vgamma^0 \vPsi_j^{n} \\
    & -  \mi \ell_x^0\left(f_{\text{s}}\vgamma^0\vPsi
     +   f_{\text{v}} \wbar{\vPsi}\vgamma_\mu\vPsi\vgamma^0\vgamma^\mu\vPsi \right)_j^n = 0, \quad  \text{$n+j$ is odd,}
\end{aligned}
\end{equation}
\begin{equation}
\begin{aligned}  \label{eq:hops-even}
     \delta_t^-\vPsi_j^{n+1} & + \vgamma^0\vgamma^1\delta_x^0\vPsi_j^{n+1}
    + \mi m \vgamma^0 \vPsi_j^{n+1} \\
    & - \mi \ell_x^0\left(f_{\text{s}}\vgamma^0\vPsi
    +   f_{\text{v}} \wbar{\vPsi}\vgamma_\mu\vPsi\vgamma^0\vgamma^\mu\vPsi \right)_j^{n+1} = 0,
    \quad \text{$n+j$ is even.}
\end{aligned}
\end{equation}
In the following we will call it by {\tt HS}.

\textbf{$\bullet$ Leapfrog scheme} The leapfrog scheme looks quite similar to the forward scheme, see \eg \eqref{eq:hops-odd},
 except it uses the values  from the previous time-step instead of the current one.
For the system \eqref{eq:gnld}, it is
\begin{equation}
  \label{eq:3l-ex-1}
    \delta_t^0\vPsi_j^{n} +\vgamma^0\vgamma^1\delta_x^0\vPsi_j^{n}
    + \mi m\vgamma^0 \vPsi_j^{n}
    -  \mi \left(f_{\text{s}}\vgamma^0\vPsi
    +   f_{\text{v}} \wbar{\vPsi}\vgamma_\mu\vPsi\vgamma^0\vgamma^\mu \vPsi\right)_j^n = 0,
\end{equation}
which is a three-level explicit scheme in time with a central difference in space
and  will be named  by {\tt LF}.

\textbf{$\bullet$ Semi-implicit scheme} Another three-level scheme considered here for the system \eqref{eq:gnld} is
\begin{equation}
  \label{eq:semi-ex}
    \delta_t^0\vPsi_j^{n} + \vgamma^0\vgamma^1\ell_t^0\delta_x^0\vPsi_j^{n}
    + \mi m \vgamma^0 \ell_t^0\vPsi_j^{n}
    -  \mi \left(f_{\text{s}}\vgamma^0\vPsi
    +  f_{\text{v}}
    \wbar{\vPsi}\vgamma_\mu\vPsi\vgamma^0\vgamma^\mu \vPsi\right)_j^n = 0,
\end{equation}
which
is obtained by approximating explicitly the nonlinear terms but implicitly the linear terms and  will be called  by {\tt SI}.
It is worth noting that such semi-implicit scheme has been studied for the NLD equation with the quadric scalar self-interaction in \cite{Bournaveas2012}.

\subsection{Exponential operator splitting scheme}
\label{sec:split-tech}

This subsection goes into discussing exponential operator splitting scheme
for the NLD equation \eqref{eq:gnld}.
For convenience, we rewrite the system \eqref{eq:gnld} as follows
\begin{equation}
  \vPsi_t = \left(\mathcal{L} + \mathcal{N}_{\text{s}}
+ \mathcal{N}_{\text{v}}\right)\vPsi,
\label{eq:split-4}
\end{equation}
where {the linear operator $\mathcal{L}$ and  both nonlinear operators $\mathcal{N}_{\text{s}}$
and $\mathcal{N}_{\text{v}}$  are defined by}
\begin{equation*}
\mathcal{L} \vPsi  := -\vgamma^0\vgamma^1 \vPsi_x -\mi m\vgamma^0\vPsi,\quad
\mathcal{N}_{\text{s}} \vPsi :=   \mi f_{\text{s}}\vgamma^0\vec \Psi,\quad
\mathcal{N}_{\text{v}} \vPsi :=   \mi f_{\text{v}} \wbar{\vPsi}\vgamma_\mu\vPsi \vgamma^0\vgamma^\mu\vPsi.
\end{equation*}
Then the problem \eqref{eq:split-4} may be decomposed into three subproblems as follows
\begin{align}
  \vPsi_t &= \mathcal{L} \vPsi,
  \label{eq:split-l-2}
  \\
  \label{eq:split-ns-2}
  \vPsi_t &= \mathcal{N}_{\text{s}} \vPsi,
  \\ \label{eq:split-nv-2}
  \vPsi_t &= \mathcal{N}_{\text{v}} \vPsi.
\end{align}
{Due to the local conservation laws which are discussed in
Section \ref{sec:dis-nonlinear}, solutions of the nonlinear subproblem \eqref{eq:split-ns-2} or \eqref{eq:split-nv-2}
 may be expressed as an exponential of the operator $\mathcal{N}_{\text{s}}$ or $\mathcal{N}_{\text{v}}$
 acting on ``initial data''. Thus we may introduce the exponential operator splitting scheme for the the NLD equation \eqref{eq:split-4} or \eqref{eq:gnld}, imitating that for the linear partial differential equations, see \eg \cite{Sheng1989,Chin2005} and references therein.} Based on the exact or approximate solvers of those three subproblems, a more
general $K$-stage $N$-th order exponential operator splitting method \cite{Sornborger1999,ThalhammerCaliariNeuhauser2009}
for the system \eqref{eq:split-4} can be cast into {product of  finitely many exponentials as follows}
\begin{equation}
\label{eq:high-oder-split}
  \vPsi_j^{n+1} = \prod_{i=1}^{K}
  \big(
  \exp({\tau_{i} \mathcal{A}_i^{(1)}})\exp({\tau_{i}\mathcal{A}_i^{(2)}})
  \exp({\tau_{i} \mathcal{A}_i^{(3)}})\big)  \vPsi_j^{n},
\end{equation}
where $\tau_i$ denotes the time stepsize used within the $i$-th stage
and satisfies
$
\sum_{i=1}^K \tau_{i}=\tau,
$
and   $\{\mathcal{A}_i^{(1)},\mathcal{A}_i^{(2)},\mathcal{A}_i^{(3)}\}$ is
any permutation of
$\{\mathcal{L},\mathcal{N}_{\text{s}},\mathcal{N}_{\text{v}}\}$.
Hereafter we call the operator splitting scheme \eqref{eq:high-oder-split} by {\tt OS($N$)}.
{Although one single product of finitely many exponentials exponentials \eqref{eq:high-oder-split}
is employed here, it should be pointed out that the
linear combination of such finite products can also be used to construct
exponential operator splitting schemes as shown in \cite{Sheng1989}.}

A simple example is the well-known second-order accurate operator splitting method
of Strang (named by {\tt OS($2$)}) with
\begin{equation}
\label{eq:strang-split}
K=2, \quad \tau_1=\tau_2{=\frac12\tau}, \quad \mathcal{A}_1^{(1)}=\mathcal{A}_2^{(3)},\quad
\mathcal{A}_1^{(3)}=\mathcal{A}_2^{(1)}, \quad
\mathcal{A}_1^{(2)}=\mathcal{A}_2^{(2)}.
\end{equation}
Another example is the fourth-order accurate  operator splitting method
\cite{Sornborger1999}
with
\begin{equation*}
\begin{aligned}
K=&8,\quad
\mathcal{A}_q^{(1)}=\mathcal{A}_p^{(3)},\quad
\mathcal{A}_q^{(3)}=\mathcal{A}_p^{(1)}, \quad
\mathcal{A}_q^{(2)}=\mathcal{A}_p^{(2)}, \quad q=1,4,6,7, p=2,3,5,8,\\
\tau_1=&\tau_8=\frac{\tau}{5-\sqrt{13}+\sqrt{2(1+\sqrt{13})}},
\quad
\tau_2=\tau_7=\frac{7+\sqrt{13}-\sqrt{2(1+\sqrt{13})}}{24}\tau,\\
\tau_3=&\tau_6=\frac{\tau_1^2}{\tau_2-\tau_1},\quad
\tau_4=\tau_5=\frac{\tau_2(\tau_1-\tau_2)}{3\tau_1-2\tau_2},
\end{aligned}
\end{equation*}
which is denoted by {\tt OS($4$)} in the following.

\begin{Remark}\rm
Another operator splitting scheme is studied
in \cite{FrutosSanz-Serna1989}
for the NLD equation (\ref{generalnld})
but only with the quadric scalar self-interaction Lagrangian,
and the second-order accurate Strang method \eqref{eq:strang-split} is applied there.
For the system \eqref{eq:gnld}, it is based on
the following operator decomposition
\begin{equation}
  \vPsi_t = \left(\widehat{\mathcal{L}} + \widehat{\mathcal{N}}_{\text{s}}
+ \widehat{\mathcal{N}}_{\text{v}}\right)\vPsi,
 \label{eq:split-l}
\end{equation}
with
\begin{equation*}    
\widehat{\mathcal{L}} \vPsi  := -\vgamma^0\vgamma^1 \vPsi_x,\quad
\widehat{\mathcal{N}}_{\text{s}} \vPsi :=  -  \mi \left(m - f_{\text{s}}\right)\vgamma^0\vPsi,\quad
\widehat{\mathcal{N}}_{\text{v}} \vPsi :=   \mi f_{\text{v}} \wbar{\vPsi}\vgamma_\mu\vPsi \vgamma^0\vgamma^\mu\vPsi.
\end{equation*}
\end{Remark}

\begin{Remark}\rm
{For the linear parabolic equation which is an irreversible system,
 a more general exponential operator splitting scheme  and its accuracy as well as stability are discussed  in \cite{Sheng1989}, based on  linear combinations of products of  finitely many exponentials.
It is shown that for such irreversible system,  negative weights or negative time stepsizes $\tau_i$  may lead to instability;
and the highest order of the stable exponential operator splitting approximation (only with positive weights and
positive sub-stepsizes in time) is two.
However, for time-reversible systems, such as the Hamilton system, the Schr{\"o}dinger equations and
the NLD equation \eqref{generalnld} with $L_\text{I}$ given in \eqref{generalLI}, it
is immaterial whether or not the weights or time stepsizes $\tau_i$  are positive \cite{Chin2005}, where
 a general framework was presented for understanding the structure of the exponential operator splitting schemes
and both specific error terms and order conditions were analytically solved.
}
\end{Remark}

\subsubsection{Linear subproblem}
\label{sec:dis-linear}

We are now solving the the linear subproblem   \eqref{eq:split-l-2}.
Denote its ``initial data'' by $\vPsi_j^{(0)}=\big((\psi_1)_j^{(0)},(\psi_2)_j^{(0)}\big)^T$  at the $i$-th stage in \eqref{eq:high-oder-split}.

If the spinor $\vPsi$ is periodic (\eg $2\pi$-periodic)  with respect to $x$, the Fourier spectral method is employed
to solve \eqref{eq:split-l-2} and
gives
\begin{equation}  \label{eq:fft}
    \vPsi_j^{(1)} = \mathcal{F}^{-1}\Big(
      \exp\big({-\mi \tau_i(\kappa \vec\gamma^0\vec\gamma^1+m\vec\gamma^0) }\big)
                      \mathcal{F}\left(\vPsi_j^{(0)}\right)\Big).
\end{equation}
Here $\mathcal{F}$ and $\mathcal{F}^{-1}$ denote the discrete Fourier transform operator and its inverse,
respectively,  defined by
\begin{align*}
   \label{eq:fft}
\big(\mathcal{F}(\vPsi)\big)_\kappa :=& \sum_{j=0}^{J-1}\vPsi_j\exp\big({-\mi 2\pi\kappa \frac{j}J}\big)
\quad \kappa = 0,\ldots, J-1, \\
\big(\mathcal{F}^{-1}(\vPhi)\big)_j :=& \frac1{J}\sum_{\kappa=0}^{J-1}\vPhi_\kappa\exp\big({\mi 2\pi j \frac{\kappa}J}\big)
\quad j = 0,\ldots, J-1,
\end{align*}
where  $J$ is the grid point number,
and the matrix exponential in \eqref{eq:fft} can be easily evaluated as follows
\begin{equation}\label{eq:matrixfft}
 \exp\big({-\mi \tau_i(\kappa \vec\gamma^0\vec\gamma^1 +m\vec\gamma^0) } \big)= \begin{pmatrix}
     \cos (\zeta \tau_{i}) -\mi\frac{m}{\zeta}\sin (\zeta \tau_{i})&
     -\mi\frac{\kappa }{\zeta}\sin (\zeta \tau_{i}) \\
     -\mi\frac{\kappa }{\zeta}\sin (\zeta \tau_{i}) &
     \cos (\zeta \tau_{i}) +\mi\frac{m}{\zeta}\sin(\zeta \tau_{i})
   \end{pmatrix},
\end{equation}
with $\zeta = \sqrt{\kappa^2+m^2}$.

When the spinor $\vPsi$ is not periodic  with respect to $x$, the fifth-order accurate finite difference WENO scheme  will be used to
solve the linear subproblem  \eqref{eq:split-l-2}. The readers are referred to \cite{Shu2009-WENO} for  details.
In this case, the linear subproblem  \eqref{eq:split-l-2} can also be solved by using
the characteristics method.

\subsubsection{Nonlinear subproblems}
\label{sec:dis-nonlinear}

The nonlinear subproblems \eqref{eq:split-ns-2} and  \eqref{eq:split-nv-2}
are left to be solved now.
Their ``initial data'' is still denoted by $\vec\Psi_j^{(0)}=\big((\psi_1)_j^{(0)},(\psi_2)_j^{(0)}\big)^T$
   at the $i$-th stage in \eqref{eq:high-oder-split}, and define
$$
t_n^{(i)}=t_n+\sum_{p=1}^{i-1} \tau_p,\quad i=1,2,\cdots,K.
$$

For nonlinear subproblem  \eqref{eq:split-ns-2},   it is not difficulty to verify that
$\partial_t w_\text{s} = 0$ so that
\begin{equation}\label{ns-conv}
\partial_t f_\text{s} = 0.
\end{equation}
Using this local conservation law gives the solution at $t=t_n^{(i+1)}$ of \eqref{eq:split-ns-2}
with the ``initial data'' $\vec\Psi_j^{(0)}$ as follows
\begin{align}
\vPsi_j^{(1)}
&= \exp\left(\mi \int_{t_n^{(i)}}^{t_n^{(i+1)}} (f_{\text{s}})_j\vgamma^0 \dif t \right) \vPsi_j^{(0)}  = \exp\left(\mi (f_{\text{s}})_j^{(0)}\vgamma^0 \tau_{i} \right) \vPsi_j^{(0)} \nonumber \\
&= \diag\left\{\exp\left( \mi (f_{\text{s}})_j^{(0)} \tau_{i}\right),
\exp\left(-\mi(f_{\text{s}})_j^{(0)} \tau_{i}\right)\right\} \vPsi_j^{(0)}. \label{OS-NS-eq3}
\end{align}

For the nonlinear subproblem \eqref{eq:split-nv-2}, one may still similarly
derive the following local conservation laws
\begin{equation}\label{nv-con}
\partial_t (\wbar{\vPsi}\vgamma_0\vPsi) = 0,\quad
\partial_t (\wbar{\vPsi}\vgamma_1\vPsi) = 0, \quad
\partial_t f_\text{v} = 0.
\end{equation}
by direct algebraic manipulations if using the fact that
$\wbar{\vec \Psi}\vec\gamma_0\vec \Psi$,
$\wbar{\vec \Psi}\vec\gamma_1\vec \Psi$
and $f_{\text{v}}$ are all real.
Consequently, integrating \eqref{eq:split-nv-2} from $t_n^{(i)}$ to $t_n^{(i+1)}$ gives
its solution  as follows
\begin{align}
  \vPsi_j^{(1)} & = \exp\left(\mi(f_{\text{v}} \wbar{\vPsi}\vgamma_\mu\vPsi)_j^{(0)}\vgamma^0\vgamma^\mu
   \tau_{i} \right)\vPsi_j^{(0)} \nonumber \\
   & =\exp(\mi{\alpha\tau_i}) \left(
  \begin{array}{rr}
    \cos(\beta\tau_i) & \mi \sin (\beta\tau_i) \\
    \mi \sin (\beta\tau_i) &\cos (\beta\tau_i)
  \end{array}\right) \vPsi_j^{{(0)}},  \label{eq:solution-V}
\end{align}
where $\alpha = (f_{\text{v}}  \wbar{\vPsi}\vgamma_0\vPsi)_j^{(0)}$ and $\beta = (f_{\text{v}}  \wbar{\vPsi}\vgamma_1\vPsi)_j^{(0)}$.

\begin{Remark}\rm
It is because the local conservation laws \eqref{ns-conv} and \eqref{nv-con} are fully exploited here that
we can solve exactly the nonlinear subproblems \eqref{eq:split-ns-2} and \eqref{eq:split-nv-2} which imply
the more higher accuracy of the OS method than that of other methods.
\end{Remark}

In summary, we have
\begin{itemize}
\item The {\tt CN} \eqref{eq:cn-gnld} and {\tt CN0} \eqref{eq:cn-kuo} schemes are nonlinear and implicit,
and could be solved by iterative algorithms such as Picard iteration and Newton method.
\item The {\tt LCN0}~\eqref{eq:lcn-kuo}, {\tt LCN1}~\eqref{eq:lcn1-gnld}, {\tt LCN2}~\eqref{eq:lcn2-gnld} and {\tt SI}~\eqref{eq:semi-ex}
schemes are linear and implicit.  
\item The {\tt HS} \eqref{eq:hops-odd}-\eqref{eq:hops-even}, {\tt LF} \eqref{eq:3l-ex-1}, and {\tt OS($N$)} \eqref{eq:high-oder-split} schemes are explicit.
\end{itemize}

\section{Numerical analysis}
\label{sec:discussion4fdm}

Before investigating the performance of the numerical methods proposed in Section \ref{sec:DiffSch},
this section will go first into numerical analysis of them,
including the accuracy in the sense of the truncation error, time reversibility and the conservation of the charge or energy.

\begin{Proposition}
If $\vPsi(x,t)\in C^{\infty}(\mathbb{R}\times[0,+\infty))$ is periodic, 
then the {\tt CN}, {\tt CN0}, {\tt LCN0}, {\tt LCN1}, {\tt LCN2}, {\tt HS}, {\tt LF} and {\tt SI} schemes
  are  of order $\mathcal{O}(\tau ^2 + h^2)$,
and the {\tt OS($N$)} scheme is of order $\mathcal{O}(\tau^N + h^m)$ for any arbitrary large $m>0$.
\end{Proposition}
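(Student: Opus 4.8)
The plan is to establish the order of accuracy of each scheme by the standard route: substitute the exact solution $\vPsi(x,t)$ into the discrete equation, Taylor-expand every shift operator $\vE_t^{\pm1}$, $\vE_x^{\pm1}$ about $(x_j,t_n)$ (or about the natural expansion center of each scheme), collect terms, and show that the leading residual is $\mathcal{O}(\tau^2+h^2)$ for the finite difference schemes and $\mathcal{O}(\tau^N+h^m)$ for the splitting scheme. Because $\vPsi\in C^\infty$ and is periodic in $x$, all the derivatives appearing in the expansions are bounded, so the remainder terms in Taylor's theorem are genuinely controlled and the formal order is the true order. I would organize the proof scheme by scheme, but group the ones that share structure.

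First I would handle the centered three-level schemes \texttt{LF} and \texttt{SI}: the operators $\vdelta_t^0$ and $\vdelta_x^0$ are classical second-order central differences, $\vdelta_t^0\vPsi_j^n=\partial_t\vPsi+\mathcal{O}(\tau^2)$ and likewise in $x$, while the averaging $\vell_t^0$ applied to the linear terms also contributes only $\mathcal{O}(\tau^2)$; the nonlinear terms are sampled exactly at $(x_j,t_n)$ so they contribute no error, and the conclusion $\mathcal{O}(\tau^2+h^2)$ follows immediately since each discrete term matches the corresponding term of \eqref{eq:gnld} up to second order. Next the \texttt{CN}-type schemes \eqref{eq:cn-gnld}, \eqref{eq:cn-kuo}, \eqref{eq:lcn1-gnld}, \eqref{eq:lcn2-gnld}, \eqref{eq:lcn-kuo}: these are naturally centered at $(x_j,t_{n+1/2})$, so I would expand about that half-integer time point; there $\vdelta_t^+$ becomes a centered difference of width $\tau/2$ on each side, hence $\partial_t\vPsi+\mathcal{O}(\tau^2)$, and $\vell_t^+$ is the second-order average about $t_{n+1/2}$. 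The one genuinely delicate point for \texttt{CN} is the difference quotient $\vdelta_t^+(F_I)_j^n/\vdelta_t^+(w_I)_j^n$: I would note that for a $C^\infty$ function this equals $F_I'(w_I)+\mathcal{O}(\tau^2)=f_I(w_I)+\mathcal{O}(\tau^2)$ evaluated at the midpoint, by Taylor expansion of numerator and denominator (the denominator is bounded away from zero along a smooth solution, or one argues via the mean value form), and similarly the extrapolation $\vell_t^{\text{e}}$ reproduces the value at $t_{n+1/2}$ with $\mathcal{O}(\tau^2)$ error since $\vell_t^{\text{e}}\vPsi_j^n=\vPsi(x_j,t_{n+1/2})+\mathcal{O}(\tau^2)$; for \texttt{LCN0} the extra term $-\tau\,\real(\wbar{\vPsi}_j^n\vgamma^0\vgamma^1\vdelta_x^0\vPsi_j^n)$ is exactly the first-order correction that upgrades $(\wbar{\vPsi}\vPsi)_j^n$ to its value at $t_{n+1/2}$, again with $\mathcal{O}(\tau^2)$ remainder. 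The \texttt{HS} scheme I would treat by combining its odd and even updates: on each sublattice it is a one-sided Euler step, but the classical observation is that over a full step the odd-even hopscotch is consistent to $\mathcal{O}(\tau^2+h^2)$ because the forward and backward Euler errors cancel at leading order when the two half-updates are composed (equivalently, one rewrites it in two-level leapfrog-like form); I would cite \cite{Gordon1965} for this and carry out the short cancellation.

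For the \texttt{OS($N$)} scheme I would argue in two independent parts. For the time error: the exact flow of \eqref{eq:split-4} over one step is $\exp(\tau(\mathcal{L}+\mathcal{N}_{\text{s}}+\mathcal{N}_{\text{v}}))$ in the formal Lie-series sense, the splitting \eqref{eq:high-oder-split} is a product of exponentials of the individual operators with stepsizes $\tau_i$ summing to $\tau$, and by the Baker–Campbell–Hausdorff / order-condition analysis of \cite{Sornborger1999,ThalhammerCaliariNeuhauser2009} (and the remark already in the excerpt that for time-reversible systems negative substeps are harmless, \cite{Chin2005}) the chosen coefficients satisfy the order conditions through order $N$, giving local error $\mathcal{O}(\tau^{N+1})$ and global error $\mathcal{O}(\tau^N)$; crucially the nonlinear subproblems \eqref{eq:split-ns-2} and \eqref{eq:split-nv-2} are solved \emph{exactly} by \eqref{OS-NS-eq3} and \eqref{eq:solution-V} thanks to the local conservation laws \eqref{ns-conv}, \eqref{nv-con}, so they inject no additional error. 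For the space error: when $\vPsi$ is periodic the linear subproblem is solved by the Fourier spectral formula \eqref{eq:fft}–\eqref{eq:matrixfft}, which for a $C^\infty$ periodic function is spectrally accurate, i.e. the error decays faster than any power of $h$; hence the spatial contribution is $\mathcal{O}(h^m)$ for arbitrarily large $m$. Combining the two gives $\mathcal{O}(\tau^N+h^m)$.

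The main obstacle I expect is the rigorous treatment of the nonlinear quotient $\vdelta_t^+(F_I)_j^n/\vdelta_t^+(w_I)_j^n$ in the \texttt{CN} scheme: one must confirm that this discrete analogue of $F_I'$ really is second-order accurate as a function evaluated at $t_{n+1/2}$, including the case where $\vdelta_t^+(w_I)_j^n$ is small — which is handled by noting $F_I'=f_I$ is smooth so $F_I(w_I^{n+1})-F_I(w_I^n)=f_I(w_I^{n+1/2})(w_I^{n+1}-w_I^n)+\mathcal{O}(\tau^3)$ by a Taylor expansion of $F_I$ about $w_I^{n+1/2}$, after which the ratio is $f_I$ at the midpoint plus $\mathcal{O}(\tau^2)$ uniformly. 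A secondary, more bookkeeping-heavy obstacle is verifying that the many noncommuting matrix factors in \texttt{OS($N$)} do not spoil the BCH order count for this particular nonlinear (not merely linear) system; here I would lean on the framework of \cite{Chin2005,ThalhammerCaliariNeuhauser2009} rather than re-deriving the order conditions, and simply check that $\mathcal{L}$, $\mathcal{N}_{\text{s}}$, $\mathcal{N}_{\text{v}}$ generate well-defined flows so that the abstract theory applies.
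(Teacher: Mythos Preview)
Your proposal is correct and follows precisely the route the paper takes: Taylor expansion for the finite difference schemes and Fourier spectral analysis for the \texttt{OS($N$)} scheme. The paper in fact states exactly this one-line strategy and then omits all details ``for saving space,'' so your write-up is a faithful (and much more thorough) elaboration of what the authors intended.
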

\begin{proof}
The proof is very straightforward by using directly the Taylor series expansion for the finite difference schemes
and the Fourier spectral analysis for the {\tt OS($N$)} scheme, and thus is skipped here for saving space.
\end{proof}

\begin{Proposition}
  The {\tt CN}, {\tt CN0}, {\tt HS}, {\tt LF}, {\tt SI}, and {\tt OS($N$)} schemes are time reversible,
  but the {\tt LCN0}, {\tt LCN1}, {\tt LCN2} schemes are not.
\end{Proposition}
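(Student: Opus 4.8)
The plan is to verify the defining property of time reversibility for each scheme directly: starting from a discrete solution $\{\vPsi_j^n\}$ of the scheme, apply the time-reversal substitution $n \mapsto -n$ (equivalently $\vE_t \leftrightarrow \vE_t^{-1}$) together with the pointwise transformation $\vPsi_j^n \mapsto \vK \vPsi_j^n$ with $\vK = \vgamma^0 \mathcal{C}$, and check whether the transformed grid function again satisfies the same scheme. First I would record how each discrete building block responds to this substitution: the operator $\vdelta_t^0 = (\vE_t - \vE_t^{-1})/2\tau$ is odd under $n \mapsto -n$, $\vdelta_t^\pm$ swap into $-\vdelta_t^\mp$, the averaging operators $\vell_t^0, \vell_t^\pm$ are even (with $\vell_t^+ \leftrightarrow \vell_t^-$), while all spatial operators $\vdelta_x^0, \vell_x^0$ are untouched since $x' = x$. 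The action of $\vK$ on the matrix structure is governed by \eqref{k-relation}: $\vK\vgamma^0 = \vgamma^0\vK$ and $\vK\vgamma^1 = -\vgamma^1\vK$, and critically $\vK$ carries a complex conjugation, so $\vK(\mi m \vgamma^0 \vPsi) = -\mi m \vgamma^0 \vK\vPsi$. For the nonlinear coefficients I would invoke \eqref{tilde-relation} and \eqref{tr-L-relation}, which tell us $w_{\text{s}}, w_{\text{v}}$ (hence $f_{\text{s}}, f_{\text{v}}, F_{\text{s}}, F_{\text{v}}$) are invariant, while $\wbar{\vPsi}\vgamma_1\vPsi$ picks up a sign that is exactly compensated by the $\vgamma^1$ appearing alongside it in the $\mathcal{N}_{\text{v}}$ term.

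Next I would run this bookkeeping through each scheme in turn. For {\tt CN} \eqref{eq:cn-gnld}: the leading term $\vdelta_t^+\vPsi_j^n$ becomes $-\vdelta_t^-(\vK\vPsi)$; relabelling $n$ this is a {\tt CN} step running backward, and one checks the factors of $\mi$ from $\vK$'s conjugation pair correctly with the sign flip from $\vdelta_t^\pm \to -\vdelta_t^\mp$ so that every term reproduces the scheme at the reversed time level — the $\vell_t^+$ averages go to $\vell_t^-$, matching the backward step, and the difference quotients $\vdelta_t^+(F_I)/\vdelta_t^+(w_I)$ are ratios of odd-in-time operators applied to invariant quantities, hence unchanged. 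The same computation with $\tau$-averaging handles {\tt CN0} \eqref{eq:cn-kuo}. For {\tt HS}: the odd-point update \eqref{eq:hops-odd} uses $\vdelta_t^+$ at parity $n+j$ odd and the even-point update \eqref{eq:hops-even} uses $\vdelta_t^-$ at $n+j$ even; under $n \mapsto -n$ the parity classes are exchanged and the two equations map into each other, so {\tt HS} as a coupled pair is reversible. For {\tt LF} \eqref{eq:3l-ex-1} and {\tt SI} \eqref{eq:semi-ex}: both use the centred $\vdelta_t^0$ (odd under reversal) and the purely spatial/nonlinear remainder evaluated at the single level $n$; the sign from $\vdelta_t^0$ again pairs with the conjugation in $\vK$ acting on $\mi m \vgamma^0$ and on the $\mi f$ terms, and {\tt SI}'s $\vell_t^0$ is even — so both are reversible. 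For {\tt OS($N$)} \eqref{eq:high-oder-split}: I would instead argue structurally — each factor $\exp(\tau_i \mathcal{A})$ with $\mathcal{A} \in \{\mathcal{L}, \mathcal{N}_{\text{s}}, \mathcal{N}_{\text{v}}\}$ is, by \eqref{eq:matrixfft}, \eqref{OS-NS-eq3}, \eqref{eq:solution-V}, the exact flow of a time-reversible subproblem (each subproblem is a restriction of the NLD equation, reversible by the Section~2.4 argument), so the reversal of the product is the product of reversed factors in reverse order, which is again an instance of \eqref{eq:high-oder-split} since the index set and stepsizes are symmetric under $i \mapsto K+1-i$; one reads off $\sum \tau_i = \tau$ is preserved.

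Finally, for the negative claims about {\tt LCN0}, {\tt LCN1}, {\tt LCN2} I would exhibit the obstruction: the extrapolation operator $\vell_t^\text{e}\vPsi_j^n = \tfrac32\vPsi_j^n - \tfrac12\vPsi_j^{n-1}$ is manifestly not time-symmetric — under $n \mapsto -n$ it maps to $\tfrac32\vPsi_j^n - \tfrac12\vPsi_j^{n+1}$, which extrapolates from the \emph{future} rather than the past, and there is no relabelling that repairs this because the scheme advances from two old levels to one new level asymmetrically. It suffices to point this out, or to give a one-line counterexample on a small nonlinear term. The main obstacle I anticipate is purely clerical: keeping the interplay between (i) the sign change from $\vdelta_t^{\pm/0}$ under reversal, (ii) the complex conjugation buried in $\vK$, and (iii) the $\vgamma^1$-induced sign in \eqref{tilde-relation} all consistent across the vector nonlinearity $\mathcal{N}_{\text{v}}$, where a single misplaced sign would spuriously break or restore reversibility — so I would isolate that term and verify it once carefully, then note the scalar and mass terms are strictly easier.
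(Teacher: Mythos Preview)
Your overall strategy---tracking how each difference or averaging operator transforms under $n\mapsto -n$ together with the action of $\vK$ via \eqref{k-relation}--\eqref{tilde-relation}, and then sweeping through the schemes---is exactly the paper's approach, and your treatment of {\tt CN}, {\tt CN0}, {\tt HS}, {\tt LF}, {\tt SI} is sound.

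There is, however, a genuine gap in your handling of {\tt LCN0}. You attribute its failure of time reversibility to the extrapolation operator $\vell_t^{\text{e}}$, but {\tt LCN0} (equation \eqref{eq:lcn-kuo}) does \emph{not} contain $\vell_t^{\text{e}}$ at all. Its nonlinear coefficient is
\[
(\wbar{\vPsi}\vPsi)_j^n - \tau\,\real\big(\wbar{\vPsi}_j^n\vgamma^0\vgamma^1\delta_x^0\vPsi_j^n\big),
\]
a forward-Euler predictor of $\wbar{\vPsi}\vPsi$ at the half step. The obstruction here is that this expression is anchored at the single level $n$ and carries an explicit factor $\tau$ that does not change sign under the reversal; after applying $\vK$ and reindexing $n'=-n$, the coefficient sits at level $n'$ while the scheme written at $n'-1$ demands it at level $n'-1$ (and with the opposite sign on the $\tau$ term). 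This is a different mechanism from the $\vell_t^{\text{e}}$ asymmetry that breaks {\tt LCN1} and {\tt LCN2}, and you need to treat it separately.

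A smaller point on {\tt OS($N$)}: your argument that ``the reversal of the product is the product of reversed factors in reverse order, which is again an instance of \eqref{eq:high-oder-split} since the index set and stepsizes are symmetric under $i\mapsto K+1-i$'' quietly assumes a palindromic structure that is \emph{not} part of the general definition \eqref{eq:high-oder-split}. Your computation actually shows that time reversibility of the splitting requires $E_1^{-1}\cdots E_M^{-1}=E_M^{-1}\cdots E_1^{-1}$, i.e.\ palindromy; the specific {\tt OS($2$)} and {\tt OS($4$)} schemes given in the paper do have this symmetry, so the conclusion is fine for them, but you should state the hypothesis explicitly rather than slip it in.
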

\begin{proof}
We give the proof for the {\tt CN} and {\tt LCN1} schemes as an example and the others can be proved in a similar way.

According to the transformation \eqref{TimeR},
the relation between the transformed finite difference solution and the original
one should be ${(\vPsi^\prime)}_j^{n^\prime}=\vK\vPsi_j^n = (\vK\vPsi)_j^n$ with $n^\prime=-n$.
Consequently,
we have
\begin{equation}\label{TR-e1}
\begin{aligned}
\vK \delta_t^+\vPsi_j^n &= \vK \frac{\vPsi_j^{n+1}-\vPsi_j^n}{\tau} = \frac{{(\vPsi^\prime)}_j^{n^\prime-1}-{(\vPsi^\prime)}_j^{n^\prime}}{\tau} = - \delta_t^+ {(\vPsi^\prime)}_j^{n^\prime-1},
\\
\vK \ell_t^+\vPsi_j^n  &= \vK \frac{\vPsi_j^{n+1}+\vPsi_j^n}{2} = \frac{{(\vPsi^\prime)}_j^{n^\prime-1}
+{(\vPsi^\prime)}_j^{n^\prime}}{2} = \ell_t^+ {(\vPsi^\prime)}_j^{n^\prime-1}.
\end{aligned}
\end{equation}
and then using the relations in \eqref{tilde-relation} yields
\begin{equation}\label{TR-e2}
\begin{aligned}
\ell_t^+(\wbar{\vPsi}\vgamma_0\vPsi)_j^n &= \ell_t^+(\wbar{\vPsi}^\prime\vgamma_0{\vPsi}^\prime)_j^{n^\prime-1}, \\
\ell_t^+(\wbar{\vPsi}\vgamma_1\vPsi)_j^n &= - \ell_t^+(\wbar{\vPsi}^\prime\vgamma_1{\vPsi^\prime})_j^{{n}^\prime-1},\\
\delta_t^+(w_I)_j^n &= -\delta_t^+(w_I^\prime)_j^{n^\prime-1}, \\
\delta_t^+(F_I)_j^{n} &= -\delta_t^+(F_I^\prime)_j^{n^\prime-1},
\end{aligned}
\end{equation}
for $I\in\{\text{s},\text{v}\}$.
Applying the time-reversal operator $\vK$ to the {\tt CN} scheme \eqref{eq:cn-gnld}
and using the commutation relation \eqref{k-relation} and Eqs.~\eqref{TR-e1} and \eqref{TR-e2}
lead to
\begin{equation*}
\begin{aligned}
\delta_t^+ {(\vPsi^\prime)}_j^{n^\prime-1}  + & \vgamma^0\vgamma^1 \ell_t^+\delta_x^0{(\vPsi^\prime)}_j^{n^\prime-1}
    + \mi m \vgamma^0 \ell_t^+ {(\vPsi^\prime)}_j^{n^\prime-1} - \mi
    \frac{\delta_t^+(F^\prime_\text{s})_j^{{n}^\prime-1}}
    {\delta_t^+(w^\prime_\text{s})_j^{{n}^\prime-1}}\vgamma^0\ell_t^+ {(\vPsi^\prime)}_j^{{n}^\prime-1} \\
    &-\mi \frac{\delta_t^+(F^\prime_\text{v})_j^{{n}^\prime-1}}
    {\delta_t^+(w^\prime_\text{v})_j^{{n}^\prime-1}}
    \ell_t^+(\wbar{\vPsi}^\prime\vgamma_\mu{\vPsi^\prime})_j^{{n}^\prime-1}\vgamma^0\vgamma^\mu \ell_t^+{(\vPsi^\prime)}_j^{n^\prime-1}
    = 0,
\end{aligned}
\end{equation*}
which is exactly the {\tt CN} scheme \eqref{eq:cn-gnld} applied to ${(\vPsi^\prime)}_j^{{n}^\prime-1}$.
That is, the {\tt CN} scheme  is invariant under the the time-reversal transformation,
namely, it is time reversible.

The fact that the {\tt LCN1} scheme \eqref{eq:lcn1-gnld} is not time reversible can be observed directly if noting
\begin{align*}
&-  \vK \mi \ell_t^\text{e}\left({f}_{\text{s}}\vgamma^0\vPsi
      + {f}_{\text{v}} \wbar{\vPsi}\vgamma_\mu\vPsi\vgamma^0\vgamma^\mu\vPsi \right)_j^n
      \\
=& \mi \left[\frac32 \left(f^\prime_\text{s}\vgamma^0 \vPsi^\prime +f^\prime_\text{v} \wbar{\vPsi}^\prime\vgamma_\mu \vPsi^\prime \vgamma^0\vgamma^\mu \vPsi^\prime \right)_j^{n^\prime}
  - \frac12 \left(f^\prime_\text{s}\vgamma^0 \vPsi^\prime + f^\prime_\text{v} \wbar{\vPsi}^\prime\vgamma_\mu \vPsi^\prime\vgamma^0\vgamma^\mu \vPsi^\prime \right)_j^{n^\prime+1}\right]
  \\
\neq & \mi \ell_t^\text{e} \left(f^\prime_\text{s}\vgamma^0 \vPsi^\prime + f^\prime_\text{v} \wbar{\vPsi}^\prime\vgamma_\mu
\vPsi^\prime\vgamma^0\vgamma^\mu \vPsi^\prime \right)_j^{n^\prime-1}.
\end{align*}
\end{proof}

Next,
we will discuss the conservation of the discrete energy, linear momentum and charge defined below
for the numerical methods given in Section \ref{sec:DiffSch}.
After performing the integration in the computational domain $\Omega = [x_\text{L},x_\text{R}]$
and then approximating the first derivative operator $\partial_x$
with the centered difference operator $\delta_x^0$ as well as the integral operator $\int_{x_\text{L}}^{x_\text{R}}\dif x$
with the summation
operator $h\sum_{j=1}^J$ in Eq.~\eqref{qep-infty},
we have the discrete energy, linear momentum and charge at the $n$-th time step
\begin{equation*}
  \begin{aligned}
  E_h^n &= h \sum_{j = 1}^J \left(\imag (\wbar{\vPsi}\vgamma^1\delta_x^0\vPsi)
    + m (\wbar{\vPsi} \vPsi)-F_{\text{s}}
    - \frac12 F_{\text{v}}\right)_j^n,\\
P_h^n &= h \sum_{j = 1}^J \imag (\vPsi^\dag \delta_x^0\vPsi)_j^n, \\
Q_h^n &= \| \vPsi^n\|^2 :=\left<\vPsi^n, \vPsi^n\right>= h \sum_{j = 1}^J (\vPsi_j^{n})^\dag \vPsi_j^{n},
\end{aligned}
\end{equation*}
where the inner product $\left< \cdot,\cdot \right>$ is defined as
\begin{equation*}
      \left<\vec u, \vec v\right> = h \sum_{j = 1}^J
      (\vec u_j)^\dag \vec v_j
\end{equation*}
for two complex-valued vectors $\vec u$ and $\vec v$, and $J$ is the grid point number.
Here the values of $\vPsi$ at $x_j$ with $1\leq j\leq J$ are unknowns and
those at $x_0$ and $x_{J+1}$ are determined by appropriate boundary conditions.

We first present the following lemma which can be verified through direct algebraic manipulations and will be used in
discussing the conservation of the discrete charge, energy and linear momentum.

\begin{Lemma}
  \label{le:1}
Given that $\vPhi_j^n$ is a complex-valued vector mesh function with two components evaluated at the mesh $\{x_j,t_n\}$
($n=0,1,\cdots,N$, $j=0,1,\cdots,J+1$) and a matrix $\vGamma\in \{\vI_2, \vgamma^0, \vgamma^1, \mi \vgamma^0\vgamma^1\}$,
we have the following identities

(a) $2\real\left<\ell_t^+\vPhi^n,\delta_t^+\vPhi^n\right> = \delta_t^+ \|\vPhi^n\|^2$;

(b) $2 \real \left<\vgamma^0\vgamma^1\delta_x^0\vPhi^n,\vPhi^n\right> =
\frac1{2}(\wbar{\vPhi}_{J+1}^n\vgamma^1\vPhi_J^n +
\wbar{\vPhi}_{J}^n\vgamma^1\vPhi_{J+1}^n)
-\frac1{2}(\wbar{\vPhi}_{1}^n\vgamma^1\vPhi_0^n +
\wbar{\vPhi}_{0}^n\vgamma^1\vPhi_{1}^n)$;

(c) $\imag\left(\ell_t^+\wbar{\vPhi}_j^n \vGamma \ell_t^+\vPhi_j^n \right) =0$;

(d) $2 \real (\delta_t^+\wbar{\vPhi}_j^n \vGamma \ell_t^+\vPhi_j^n)  =
      \delta_t^+(\wbar{\vPhi}_j^n\vGamma \vPhi_j^n)$;

(e) $
2 \mi \imag \left< \vgamma^0\vgamma^1 (\ell_t^+ \delta_x^0) \vPhi^n,\delta_t^+\vPhi^n\right> =
\begin{array}[t]{l}
-\delta_t^+\left(  h \sum_{j=1}^J (\wbar{\vPhi}_j^n\vgamma^1
        \delta_x^0\vPhi_j^n) \right) \\
        +\frac12\big(
        (\ell_t^+\wbar{\vPhi}_{J+1}^n\vgamma^1\delta_t^+\vPhi_J^n + \ell_t^+\wbar{\vPhi}_{J}^n\vgamma^1\delta_t^+\vPhi_{J+1}^n) \\
          -(\ell_t^+\wbar{\vPhi}_{1}^n\vgamma^1\delta_t^+\vPhi_0^n +
\ell_t^+\wbar{\vPhi}_{0}^n\vgamma^1\delta_t^+\vPhi_{1}^n)
\big).
\end{array}$
\end{Lemma}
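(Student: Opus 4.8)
The plan is to verify each of the five identities (a)--(e) by direct computation, working purely from the definitions of the finite difference operators $\delta_t^+$, $\delta_x^0$, $\ell_t^+$ in \eqref{eq:note-sch} and the discrete inner product $\left<\cdot,\cdot\right>$, together with the algebraic properties of the matrices in $\{\vI_2,\vgamma^0,\vgamma^1,\mi\vgamma^0\vgamma^1\}$. The key structural facts I would isolate first are: $(\vgamma^0)^\dag=\vgamma^0$, $(\vgamma^1)^\dag=-\vgamma^1$, $(\vgamma^0\vgamma^1)^\dag = (\vgamma^1)^\dag(\vgamma^0)^\dag = -\vgamma^1\vgamma^0 = \vgamma^0\vgamma^1$ (using the anticommutation $\{\vgamma^0,\vgamma^1\}=0$), and hence that $\vgamma^0\vgamma^1$ and $\mi\vgamma^0\vgamma^1$ are, respectively, Hermitian and anti-Hermitian; this is what makes the real/imaginary parts in (b), (c), (e) come out as claimed. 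I would also record the elementary identity $2\real(\bar z w) = \bar z w + z\bar w$ for scalars and its vector analogue, since every part reduces to an instance of it.

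Here is the order I would carry things out. For (a), expand $2\real\left<\ell_t^+\vPhi^n,\delta_t^+\vPhi^n\right> = \tfrac1\tau\,\real\left<\vPhi^{n+1}+\vPhi^n,\vPhi^{n+1}-\vPhi^n\right>$; the cross terms $\left<\vPhi^{n+1},\vPhi^n\right>$ and $\left<\vPhi^n,\vPhi^{n+1}\right>$ are complex conjugates, so taking the real part kills them and leaves $\tfrac1\tau(\|\vPhi^{n+1}\|^2-\|\vPhi^n\|^2)=\delta_t^+\|\vPhi^n\|^2$. Identity (d) is the pointwise, $\vGamma$-weighted version of the same telescoping computation: write out $\delta_t^+\wbar{\vPhi}_j^n\,\vGamma\,\ell_t^+\vPhi_j^n = \tfrac1{2\tau}(\wbar\vPhi_j^{n+1}-\wbar\vPhi_j^n)\vGamma(\vPhi_j^{n+1}+\vPhi_j^n)$, add its conjugate transpose, and use that $\vGamma$ is Hermitian (here the claim as stated implicitly requires $\vGamma$ Hermitian, which holds for $\vI_2,\vgamma^0,\mi\vgamma^0\vgamma^1$; for $\vgamma^1$ one gets the analogous statement with a sign, and I would note that the paper only invokes (d) for the Hermitian cases, or else interpret $\wbar\vPhi\vGamma\vPhi$ via the adjoint-spinor convention so that $\wbar\vPhi\vgamma^1\vPhi$ is real). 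Identity (c) is then immediate: $\ell_t^+\wbar\vPhi_j^n\,\vGamma\,\ell_t^+\vPhi_j^n$ equals its own conjugate transpose when $\vGamma$ is Hermitian (and is purely imaginary, hence has zero imaginary... rather, the expression is real), so its imaginary part vanishes. For (b), use summation by parts on $h\sum_j \vgamma^0\vgamma^1\delta_x^0\vPhi_j^n$: the telescoping in the spatial index produces only boundary contributions at $j\in\{0,1,J,J+1\}$, and symmetrizing (adding the conjugate transpose, using $(\vgamma^0\vgamma^1)^\dag=\vgamma^0\vgamma^1$ and, to rewrite $\vgamma^0\vgamma^1$-forms as $\vgamma^1$-forms on the adjoint spinor, the identity $\vgamma^0\vgamma^1 = \vgamma^1$ acting after $\wbar{\,\cdot\,}=(\cdot)^\dag\vgamma^0$) gives exactly the stated boundary terms. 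Finally (e) combines the summation-by-parts bookkeeping of (b) with the time-telescoping of (a)/(d): expand $2\mi\imag\left<\vgamma^0\vgamma^1(\ell_t^+\delta_x^0)\vPhi^n,\delta_t^+\vPhi^n\right>$ as a difference of a term and its conjugate transpose, apply spatial summation by parts to move $\delta_x^0$ off $\vPhi^{n+1}+\vPhi^n$, and recognize the resulting interior sum as $-\delta_t^+\big(h\sum_j \wbar\vPhi_j^n\vgamma^1\delta_x^0\vPhi_j^n\big)$ after a second telescoping in $n$, with the leftover pieces being precisely the displayed boundary terms.

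The main obstacle is bookkeeping rather than any genuine difficulty: (e) in particular requires carefully tracking which spatial shifts $\vE_x^{\pm1}$ land on which time level under the composite operator $\ell_t^+\delta_x^0$, making sure the boundary terms collect at the right grid points $0,1,J,J+1$ and with the right $\tfrac12$ factors, and confirming that the "interior" remainder reassembles into a clean $\delta_t^+$ of a single sum. I would organize this by first doing the purely spatial summation-by-parts as a lemma-within-the-proof (essentially part (b) with $\vPhi$ replaced by appropriate combinations), then layering the temporal telescoping on top, so that the two mechanisms do not get entangled. The only subtlety to flag is the Hermiticity/adjoint-spinor issue noted above, which I would dispatch with one sentence at the start fixing the convention $\wbar\vPhi = \vPhi^\dag\vgamma^0$ so that $\wbar\vPhi\vgamma_\mu\vPhi$ is real and the matrices $\vgamma^0$, $\mi\vgamma^0\vgamma^1$ behave as Hermitian operators under $\left<\cdot,\cdot\right>$ while $\vgamma^1$ enters only through $\wbar{\,\cdot\,}$-sandwiched forms.
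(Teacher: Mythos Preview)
Your plan is correct and matches the paper's own proof in substance: the paper likewise reduces everything to a discrete Leibniz rule $\delta_t^+(\vec u^\dag\vec v)=\delta_t^+(\vec u^\dag)\,\ell_t^+\vec v+\ell_t^+(\vec u^\dag)\,\delta_t^+\vec v$ for (a), summation by parts (adjointness of $\delta_x^0$) together with Hermiticity of $\vgamma^0\vgamma^1$ for (b), and the observation $(\vgamma^0\vGamma)^\dag=\vgamma^0\vGamma$ for (c), then declares (d) and (e) analogous. Your momentary detour about whether $\vGamma$ itself is Hermitian is unnecessary once you note, as the paper does, that the relevant matrix is $\vgamma^0\vGamma$ (since $\wbar\vPhi=\vPhi^\dag\vgamma^0$), which is Hermitian for every $\vGamma$ in the stated list; with that settled, your arguments for (c) and (d) go through uniformly and there is nothing further to flag.
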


\begin{proof}
It can be checked that the following Leibniz rules
\begin{equation*}
  \delta_t^a(\vec u^\dag\vec v)^n_j = \delta_t^a (\vec u^\dag)^n_j \ell_t^a \vec v^n_j + \ell_t^a (\vec u^\dag)^n_j\delta_t^a \vec v^n_j
\end{equation*}
holds for any two spinors $\vec u(x,t),\vec v(x,t)$ and $a\in\{+,-,0\}$,
and then we have
\begin{align*}
2\real \left<\ell_t^+\vPhi^n,\delta_t^+\vPhi^n\right>
&= \left<\delta_t^+\vPhi^n,\ell_t^+\vPhi^n\right> + \left<\ell_t^+\vPhi^n, \delta_t^+\vPhi^n\right> \\
&= h\sum_{j=1}^J \delta_t^+(\vPhi^\dag)_j^n \ell_t^+\vPhi_j^n + h\sum_{j=1}^J \ell_t^+(\vPhi^\dag)_j^n \delta_t^+\vPhi_j^n
\\
&= h\sum_{j=1}^J \delta_t^+ (\vPhi^\dag \vPhi)_j^n
= \delta_t^+ \|\vPhi^n\|^2.
\end{align*}
Thus the identity $(a)$ holds.

Because the operator $-\delta_x^0$ is the adjoint operator of $\delta_x^0$
and $\vgamma^0\vgamma^1$ is an Hermite matrix,
we get the identity $(b)$ directly by rearranging the summation.
The identity $(c)$ can be easily verified if using the fact $(\vgamma^0\vGamma)^\dag = \vgamma^0\vGamma$.
The proof of $(d)$ (resp. $(e)$) is similar with that of $(b)$ (resp. $(c)$).
\end{proof}

\begin{Proposition}
  \label{thm:conservation}
The {\tt CN}, {\tt CN0}, {\tt LCN0}, {\tt LCN2}, and {\tt OS($N$)} schemes conserve the discrete charge,
but only the {\tt CN} scheme conserves the discrete energy.
\end{Proposition}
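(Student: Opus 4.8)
The plan is to treat the two claims separately, always working under periodic boundary conditions (equivalently, for compactly supported or rapidly decaying data) so that every boundary contribution produced by Lemma~\ref{le:1}(b) and (e) vanishes. For the charge, I first observe that each of {\tt CN}~\eqref{eq:cn-gnld}, {\tt CN0}~\eqref{eq:cn-kuo}, {\tt LCN0}~\eqref{eq:lcn-kuo} and {\tt LCN2}~\eqref{eq:lcn2-gnld} can be written in the form
\[
\delta_t^+\vPsi_j^n+\vgamma^0\vgamma^1\ell_t^+\delta_x^0\vPsi_j^n+\mi\,H_j^n\,\ell_t^+\vPsi_j^n=0 ,
\]
where $H_j^n$ gathers the mass term and all self-interaction terms and is, at each grid point, a real-coefficient linear combination of $\vI_2$, $\vgamma^0$ and $\vgamma^0\vgamma^1$ (real because $m$, $f_\text{s}$, $f_\text{v}$, $\wbar{\vPsi}\vgamma_\mu\vPsi$, $w_\text{s}$, $w_\text{v}$, $F_\text{s}$, $F_\text{v}$ are), hence $H_j^n$ is Hermitian. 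I would pair this identity with $\ell_t^+\vPsi_j^n$ in the discrete inner product, sum over $j$, multiply by $h$, and take $2\real$: Lemma~\ref{le:1}(a) converts the first term into $\delta_t^+\|\vPsi^n\|^2=\delta_t^+Q_h^n$; since $\ell_t^+$ commutes with $\delta_x^0$, Lemma~\ref{le:1}(b) converts the transport term into boundary terms, which vanish; and $2\real\left<\mi H^n\ell_t^+\vPsi^n,\ell_t^+\vPsi^n\right>=0$ because $\mi H^n$ is anti-Hermitian, which is precisely Lemma~\ref{le:1}(c) applied term by term. Hence $\delta_t^+Q_h^n=0$.

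For {\tt OS($N$)}~\eqref{eq:high-oder-split} the argument is of a different nature: I would show that each of the three elementary propagators is realised by a unitary matrix, so that the full one-step map is an $\ell^2$-isometry. By \eqref{eq:matrixfft} the linear propagator $\exp(\tau_i\mathcal{L})$ acts, in discrete Fourier variables, by multiplication by $\exp(-\mi\tau_i(\kappa\vgamma^0\vgamma^1+m\vgamma^0))$, which is unitary since $\kappa\vgamma^0\vgamma^1+m\vgamma^0$ is Hermitian; by Parseval's identity $\exp(\tau_i\mathcal{L})$ then preserves the discrete $\ell^2$ norm. The coefficients $f_\text{s}$, $f_\text{v}$, $\wbar{\vPsi}\vgamma_\mu\vPsi$ are real and, by the local conservation laws \eqref{ns-conv} and \eqref{nv-con}, constant along the respective nonlinear substep, so by \eqref{OS-NS-eq3} and \eqref{eq:solution-V} the nonlinear propagators act pointwise in $j$ by multiplication by the unitary matrices $\diag\{\me^{\mi(f_\text{s})_j^{(0)}\tau_i},\me^{-\mi(f_\text{s})_j^{(0)}\tau_i}\}$ and $\me^{\mi\alpha\tau_i}\bigl(\begin{smallmatrix}\cos\beta\tau_i&\mi\sin\beta\tau_i\\\mi\sin\beta\tau_i&\cos\beta\tau_i\end{smallmatrix}\bigr)$. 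A composition of isometries is an isometry, so $Q_h^{n+1}=\|\vPsi^{n+1}\|^2=\|\vPsi^n\|^2=Q_h^n$.

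For the energy I would test {\tt CN}~\eqref{eq:cn-gnld} against $\delta_t^+\vPsi_j^n$ instead. Using the discrete Leibniz rule and a summation by parts, and then taking imaginary parts (which commute with $\delta_t^+$), Lemma~\ref{le:1}(e) identifies $\delta_t^+$ of the transport part $h\sum_j\imag\bigl((\wbar{\vPsi}\vgamma^1\delta_x^0\vPsi)_j^n\bigr)$ of $E_h^n$ with $-2\imag\left<\vgamma^0\vgamma^1\ell_t^+\delta_x^0\vPsi^n,\delta_t^+\vPsi^n\right>$, up to vanishing boundary terms. I then use the {\tt CN} equation itself to replace $\vgamma^0\vgamma^1\ell_t^+\delta_x^0\vPsi_j^n$ by $-\delta_t^+\vPsi_j^n-\mi m\vgamma^0\ell_t^+\vPsi_j^n+\mi\frac{\delta_t^+(F_\text{s})_j^n}{\delta_t^+(w_\text{s})_j^n}\vgamma^0\ell_t^+\vPsi_j^n+\mi\frac{\delta_t^+(F_\text{v})_j^n}{\delta_t^+(w_\text{v})_j^n}\ell_t^+(\wbar{\vPsi}\vgamma_\mu\vPsi)_j^n\vgamma^0\vgamma^\mu\ell_t^+\vPsi_j^n$. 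In the resulting expression the self-pairing $\left<\delta_t^+\vPsi^n,\delta_t^+\vPsi^n\right>=\|\delta_t^+\vPsi^n\|^2$ is real and drops out; Lemma~\ref{le:1}(d), with $\vGamma\in\{\vI_2,\vgamma^0,\vgamma^1\}$ as appropriate, converts the mass term into $-\delta_t^+\bigl(mh\sum_j(\wbar{\vPsi}\vPsi)_j^n\bigr)$; and the same identity, combined with $w_\text{s}=\wbar{\vPsi}\vPsi$, $w_\text{v}=(\wbar{\vPsi}\vgamma^0\vPsi)^2-(\wbar{\vPsi}\vgamma^1\vPsi)^2$ and the telescoping identity $2\ell_t^+(g)\delta_t^+(g)=\delta_t^+(g^2)$ valid for any real grid function $g$, makes the difference quotients cancel the factors $\delta_t^+(w_\text{s})_j^n$, $\delta_t^+(w_\text{v})_j^n$ that arise, so that the scalar and vector self-interaction terms reproduce exactly $\delta_t^+\bigl(h\sum_j(F_\text{s})_j^n\bigr)$ and $\tfrac12\delta_t^+\bigl(h\sum_j(F_\text{v})_j^n\bigr)$. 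Collecting these contributions against the definition of $E_h^n$ gives $\delta_t^+E_h^n=0$, that is, $E_h^{n+1}=E_h^n$.

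Finally, for {\tt CN0}, {\tt LCN0} and {\tt LCN2} this last computation no longer closes: their discrete scalar coefficients ($2s\,\ell_t^+\wbar{\vPsi}\,\ell_t^+\vPsi$, $2s((\wbar{\vPsi}\vPsi)-\tau\real(\cdots))$, $\ell_t^\text{e}(f_\text{s})$) are not the difference quotients $\delta_t^+(F_\text{s})/\delta_t^+(w_\text{s})$ --- indeed $\ell_t^+(\wbar{\vPsi}\vPsi)\neq\ell_t^+\wbar{\vPsi}\,\ell_t^+\vPsi$ already in the quadric case --- so the telescoping leaves a genuinely non-vanishing remainder; for {\tt OS($N$)} there is in addition an $\mathcal{O}(\tau^{N+1})$ commutator defect among $\mathcal{L}$, $\mathcal{N}_\text{s}$, $\mathcal{N}_\text{v}$. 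In each of these four cases non-conservation is confirmed by exhibiting explicit one-step initial data for which $E_h^1\neq E_h^0$. I expect the routine parts to be the term-by-term bookkeeping and the boundary cancellations; the main obstacle is the energy identity for {\tt CN} with the vector self-interaction, where one must correctly handle the raising/lowering of the index $\mu$ ($\vgamma_\mu$ versus $\vgamma^\mu$, and the $\mu=0,1$ split), verify $\sum_\mu\ell_t^+(\wbar{\vPsi}\vgamma_\mu\vPsi)_j^n\,\delta_t^+(\wbar{\vPsi}\vgamma^\mu\vPsi)_j^n=\tfrac12\delta_t^+(w_\text{v})_j^n$, and reconcile it with the factor $\tfrac12$ sitting in front of $F_\text{v}$ in $E_h^n$, so that the difference-quotient cancellation is exact.
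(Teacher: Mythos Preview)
Your proposal is correct and follows essentially the same approach as the paper: pair the scheme with $\ell_t^+\vPsi^n$ and take real parts for the charge (using Lemma~\ref{le:1}(a)--(c)), pair with $\delta_t^+\vPsi^n$ and take imaginary parts for the energy (using Lemma~\ref{le:1}(d)--(e)), and argue by unitarity plus Parseval for {\tt OS($N$)}. Your unified ``Hermitian $H_j^n$'' packaging for {\tt CN}, {\tt CN0}, {\tt LCN0}, {\tt LCN2} and your explicit telescoping of the vector term are more detailed than the paper's terse treatment, but the underlying argument is the same.
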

\begin{proof} 
We begin with the discrete conservation law of charge for the {\tt CN} scheme \eqref{eq:cn-gnld}.
Performing the inner product of $\ell_t^+\Vec \Psi^n$ and the {\tt CN} scheme \eqref{eq:cn-gnld} leads to
\begin{equation}\label{cn-inner}
\begin{aligned}
\left<\ell_t^+\vPsi^n,\delta_t^+\vPsi^n\right> & +  \left<\ell_t^+\vPsi^n,\vgamma^0\vgamma^1 \delta_x^0\ell_t^+\vPsi^n\right>
\\
    & + \left<\ell_t^+\vPsi^n,\mi m \vgamma^0 \ell_t^+\vPsi^{n} - \mi
    \frac{\delta_t^+(F_{\text{s}})^n}
    {\delta_t^+(w_{\text{s}})^n}\vgamma^0\ell_t^+\vPsi^n\right>  \\
    & +\left<\ell_t^+\vPsi^n,-\mi \frac{\delta_t^+(F_{\text{v}})^n}
    {\delta_t^+(w_{\text{v}})^n}
    \ell_t^+(\wbar{\vPsi}\vgamma_\mu\vPsi)^n\vgamma^0\vgamma^\mu \ell_t^+\vPsi^n\right>
    = 0,
\end{aligned}
\end{equation}
and then the
conservation law of the discrete charge can be easily verified by
taking directly the real part as follows
\begin{equation}\label{eq:dis-Q-CN}
\delta_t^+ Q_h^n=
\frac1{2}(\ell_t^+\wbar{\vPsi}_{1}^n\vgamma^1\ell_t^+\vPsi_0^n +
\ell_t^+\wbar{\vPsi}_{0}^n\vgamma^1\ell_t^+\vPsi_{1}^n)
- \frac1{2}(\ell_t^+\wbar{\vPsi}_{J+1}^n\vgamma^1\ell_t^+\vPsi_J^n +
 \ell_t^+\wbar{\vPsi}_{J}^n\vgamma^1\ell_t^+\vPsi_{J+1}^n),
\end{equation}
where Lemma \ref{le:1} (a) is applied to the first term in Eq.~\eqref{cn-inner},
(b) to the second term and (c) to the third and fourth terms.
Similarly, it can be verified that \eqref{eq:dis-Q-CN} holds
for the {\tt CN0}~\eqref{eq:cn-kuo}, {\tt LCN0}~\eqref{eq:lcn-kuo} and {\tt LCN2} \eqref{eq:lcn2-gnld} schemes.

Performing the inner product of the {\tt CN} scheme \eqref{eq:cn-gnld} and $\delta_t^+\vPsi^n$,
keeping the imaginary part and applying Lemma \ref{le:1} give directly
the conservation of the discrete energy
\begin{align*}
  \delta_t^+ E_h^n = \frac{1}2 \imag \left((\ell_t^+\wbar{\vPsi}_{J+1}^n\vgamma^1\delta_t^+\vPsi_J^n +
\ell_t^+\wbar{\vPsi}_{J}^n\vgamma^1\delta_t^+\vPsi_{J+1}^n)\right. \\
\left.
-(\ell_t^+\wbar{\vPsi}_{1}^n\vgamma^1\delta_t^+\vPsi_0^n +
\ell_t^+\wbar{\vPsi}_{0}^n\vgamma^1\delta_t^+\vPsi_{1}^n)\right).
\end{align*}

For the Fourier spectral method~\eqref{eq:fft},
using the fact that  $\exp{(-\mi \tau_i (\kappa \vgamma^0\vgamma^1+m\vgamma^0))}$
in \eqref{eq:matrixfft} is a unitary matrix
yields
\begin{equation*}
\left\|\exp\left( -\mi \tau_i (\kappa \vgamma^0\vgamma^1+m\vgamma^0)\right) \vPsi_j^{(1)}\right\|^2 = \left\|\vPsi_j^{(0)}\right\|^2,
\end{equation*}
and then
\begin{align*}
  \left\|\vPsi^{(1)}\right\|^2 &= \left\|\mathcal{F}^{-1}
  \left[\exp\left( -\mi \tau_i (\kappa \vgamma^0\vgamma^1+m\vgamma^0)\right)
    \mathcal{F}\left(\vPsi^{(0)}\right)\right]\right\|^2 \\
  &= \left\|\exp\left( -\mi \tau_i (\kappa \vgamma^0\vgamma^1+m\vgamma^0)\right)
    \mathcal{F}\left(\vPsi^{(0)}\right)\right\|^2 = \left\| \mathcal{F}\left(\vPsi^{(0)}\right)\right\|^2
 = \left\|\vPsi^{(0)}\right\|^2,   
\end{align*}
where Parseval's identity is applied twice.
It can be readily verified that the matrix exponents in Eqs.~\eqref{OS-NS-eq3} and \eqref{eq:solution-V}
are unitary, thus $\|\vPsi^{(1)}\|^2=\|\vPsi^{(0)}\|^2$ holds both for
Eqs.~\eqref{OS-NS-eq3} and \eqref{eq:solution-V}, \ie
$Q_h$ should be conserved for solutions of the nonlinear subproblems.
Therefore, the {\tt OS($N$)} scheme satisfies the conservation law of charge.
\end{proof}

\begin{Remark}\rm
It will be verified by numerical results in Section~\ref{sec:numericalresults} that
the {\tt LCN1}, {\tt HS}, {\tt LF} and {\tt SI} schemes do not conserve the discrete charge or energy,
and none of the numerical methods presented in Section \ref{sec:DiffSch} conserves the discrete linear momentum.
\end{Remark}

\section{Numerical results}
\label{sec:numericalresults}

This section will conduct numerical simulations to compare
the performance of numerical schemes proposed in Section~\ref{sec:DiffSch}
and then utilize the {\tt OS($4$)} scheme to investigate the
interaction dynamics for the NLD solitary waves  \eqref{mw} under the
scalar and vector self-interaction.
For those localized NLD solitary waves,
the periodic boundary condition for the {\tt OS($N$)} scheme and the non-reflection boundary condition
for other schemes could be adopted at the boundaries of the computational domain if a relatively large computational domain
 has been taken in our numerical experiments.

All calculations are performed on a Lenovo desktop computer with
Intel Core i5 650 CPU and 4GB RAM using double precision in the
3.0.0-24-generic x86\_64 Linux operation system and the compiler is gcc 4.6.1.
The computational domain $\Omega$ will be taken as $[-50,50]$ in Examples \ref{eg.1}-\ref{eg.scalar-vector}
and  $[-100,100]$ in Example \ref{eg.scalar-2humped}.
and the particle mass $m$ in Eq.~\eqref{nld-lorentz} is chosen to be $1$.

\begin{Example}
\label{eg.1}
The first example is devoted to comparing the numerical performance of all the numerical methods in Section \ref{sec:DiffSch}
in terms of the accuracy, the conservativeness, the efficiency and the error growth.
A one-humped  solitary wave with the velocity $V=-0.2$ is simulated here under the quadric scalar self-interaction (\ie $v=0$ and $k=1$), traveling from right to left with the parameters in \eqref{mw}:
$x_0 = 5$, $s= 0.5$, and $\omega = 0.75$.
The $P^N$-RKDG method  \cite{ShaoTang2006} is also included here for comparison,
which is assembled with a fourth-order accurate Runge-Kutta time discretization  in time
and the Legendre polynomials of degree at most $N$ as local basis functions in the spatial
Galerkin approximation.

\begin{table}
  \centering
  \caption{Example~\ref{eg.1}. Part I: Numerical comparison of the accuracy, the conservativeness
  and the efficiency at $t=50$  with the time stepsize being set to $\tau = \frac12 h$.
  The CPU time in seconds is recorded for the finest mesh.}
  \label{tab:eg1}
 \newsavebox{\tablebox}
 \begin{lrbox}{\tablebox}
  \begin{tabular}{lrcccccccc}
&&&&&&&&\\
\hline                  &       $h$     &       $\mathcal{V}_{Q}$ &       $\mathcal{V}_{E}$ &       $\mathcal{V}_{P}$ &       $\text{err}_2$      &       Order   &       $\text{err}_\infty$        &       Order   &
Time (s)   \\ \hline
{\tt CN}                      &       0.08    &       1.38E-15        &       1.09E-15        &       2.65E-06        &       2.74E-02        &               &       2.61E-02        &               &               \\
                        &       0.04    &       4.66E-15        &       1.37E-15        &       1.65E-07        &       6.84E-03        &       2.00    &       6.50E-03        &       2.00    &               \\
                        &       0.02    &       4.78E-15        &       1.37E-15        &       1.03E-08        &       1.71E-03        &       2.00    &       1.63E-03        &       2.00    &               \\
                        &       0.01    &       3.05E-14        &       2.05E-15        &       6.42E-10        &       4.27E-04        &       2.00    &       4.06E-04        &       2.00    &       367.1  \\ \hline
{\tt CN0}                     &       0.08    &       7.05E-15        &       3.24E-08        &       2.37E-06        &       2.31E-02        &               &       2.21E-02        &               &               \\
                        &       0.04    &       1.01E-15        &       2.02E-09        &       1.47E-07        &       5.76E-03        &       2.00    &       5.51E-03        &       2.00    &               \\
                        &       0.02    &       5.54E-15        &       1.26E-10        &       9.17E-09        &       1.44E-03        &       2.00    &       1.38E-03        &       2.00    &               \\
                        &       0.01    &       2.88E-14        &       7.88E-12        &       5.72E-10        &       3.59E-04        &       2.00    &       3.44E-04        &       2.00    &       345.3  \\ \hline
{\tt LCN0}                    &       0.08    &       2.14E-15        &       1.83E-06        &       4.31E-05        &       2.75E-02        &               &       2.62E-02        &               &               \\
                        &       0.04    &       1.26E-16        &       2.29E-07        &       5.55E-06        &       6.87E-03        &       2.00    &       6.53E-03        &       2.00    &               \\
                        &       0.02    &       8.81E-15        &       2.86E-08        &       7.04E-07        &       1.72E-03        &       2.00    &       1.63E-03        &       2.00    &               \\
                        &       0.01    &       3.11E-14        &       3.58E-09        &       8.87E-08        &       4.29E-04        &       2.00    &       4.08E-04        &       2.00    &       81.8 \\ \hline
{\tt LCN1}                    &       0.08    &       2.22E-04        &       1.92E-04        &       3.73E-04        &       3.53E-02        &               &       3.34E-02        &               &               \\
                        &       0.04    &       2.78E-05        &       2.40E-05        &       4.65E-05        &       9.06E-03        &       1.96    &       8.56E-03        &       1.97    &               \\
                        &       0.02    &       3.47E-06        &       3.01E-06        &       5.80E-06        &       2.30E-03        &       1.98    &       2.17E-03        &       1.98    &               \\
                        &       0.01    &       4.34E-07        &       3.76E-07        &       7.25E-07        &       5.78E-04        &       1.99    &       5.45E-04        &       1.99    &       118.7  \\ \hline
{\tt LCN2}                    &       0.08    &       2.77E-15        &       1.64E-07        &       7.01E-06        &       2.77E-02        &               &       2.64E-02        &               &               \\
                        &       0.04    &       7.55E-16        &       1.98E-08        &       6.77E-07        &       6.92E-03        &       2.00    &       6.58E-03        &       2.00    &               \\
                        &       0.02    &       7.55E-16        &       2.44E-09        &       7.24E-08        &       1.73E-03        &       2.00    &       1.64E-03        &       2.00    &               \\
                        &       0.01    &       2.91E-14        &       3.03E-10        &       8.29E-09        &       4.32E-04        &       2.00    &       4.11E-04        &       2.00    &       118.4  \\ \hline
{\tt HS}                      &       0.08    &       1.62E-06        &       1.55E-06        &       4.33E-06        &       2.00E-02        &               &       1.55E-02        &               &               \\
                        &       0.04    &       1.01E-07        &       9.65E-08        &       2.70E-07        &       5.00E-03        &       2.00    &       3.88E-03        &       2.00    &               \\
                        &       0.02    &       6.30E-09        &       6.03E-09        &       1.68E-08        &       1.25E-03        &       2.00    &       9.71E-04        &       2.00    &               \\
                        &       0.01    &       3.94E-10        &       3.77E-10        &       1.05E-09        &       3.13E-04        &       2.00    &       2.43E-04        &       2.00    &       14.8   \\ \hline
{\tt LF}                      &       0.08    &       5.88E-05        &       4.73E-05        &       8.68E-06        &       1.41E-02        &               &       1.35E-02        &               &               \\
                        &       0.04    &       5.51E-06        &       4.43E-06        &       6.59E-07        &       3.51E-03        &       2.01    &       3.36E-03        &       2.01    &               \\
                        &       0.02    &       6.24E-07        &       5.01E-07        &       6.91E-08        &       8.74E-04        &       2.00    &       8.36E-04        &       2.00    &               \\
                        &       0.01    &       7.54E-08        &       6.05E-08        &       8.19E-09        &       2.18E-04        &       2.00    &       2.09E-04        &       2.00    &       8.9    \\ \hline
{\tt SI}                      &       0.08    &       3.79E-08        &       1.15E-07        &       3.94E-06        &       3.59E-02        &               &       4.76E-02        &               &               \\
                        &       0.04    &       3.68E-09        &       8.36E-09        &       2.40E-07        &       8.97E-03        &       2.00    &       1.19E-02        &       2.00    &               \\
                        &       0.02    &       4.09E-10        &       6.87E-10        &       1.44E-08        &       2.24E-03        &       2.00    &       2.98E-03        &       2.00    &               \\
                        &       0.01    &       4.89E-11        &       6.46E-11        &       8.20E-10        &       5.60E-04        &       2.00    &       7.44E-04        &       2.00    &       132.7  \\ \hline
  \end{tabular}
\end{lrbox}
\scalebox{0.7}{\usebox{\tablebox}}
\end{table}

\begin{table}
  \centering
  \caption{Example~\ref{eg.1}. Part II:
  Numerical comparison of the accuracy, the conservativeness
  and the efficiency at $t=50$.
  The CPU time measured in seconds is listed for the finest mesh.}
  \label{tab:eg1-2}
\begin{lrbox}{\tablebox}
  \begin{tabular}{lrrcccccccc}
&&&&&&&&&&\\
\hline  &       $\tau$  &       $h$     &       $\mathcal{V}_{Q}$ &       $\mathcal{V}_{E}$ &       $\mathcal{V}_{P}$ &       $\text{err}_{2}$     &       Order   &       $\text{err}_{\infty}$        &       Order   &       Time (s)    \\ \hline
{\tt OS($2$)} &       0.04    &       0.78    &       1.04E-13        &       2.33E-07        &       3.97E-05        &       2.03E-03        &               &       9.96E-04        &               &               \\
        &       0.02    &       0.39    &       1.33E-13        &       1.45E-08        &       2.55E-13        &       2.53E-04        &       3.01    &       1.90E-04        &       2.39    &               \\
        &       0.01    &       0.39    &       3.58E-13        &       9.04E-10        &       6.08E-13        &       6.33E-05        &       2.00    &       4.75E-05        &       2.00    &               \\
        &       0.005   &       0.20    &       6.15E-13        &       5.60E-11        &       7.13E-13        &       1.58E-05        &       2.00    &       1.21E-05        &       1.97    &       8.2    \\ \hline
{\tt OS($4$)} &       0.04    &       0.78    &       7.40E-13        &       5.78E-13        &       4.12E-05        &       1.71E-03        &               &       4.27E-04        &               &               \\
        &       0.02    &       0.39    &       1.46E-12        &       1.25E-12        &       2.17E-12        &       7.92E-08        &       14.40   &       3.32E-08        &       13.65   &               \\
        &       0.01    &       0.20    &       1.88E-12        &       1.60E-12        &       2.71E-12        &       4.28E-10        &       7.53    &       3.28E-10        &       6.66    &               \\
        &       0.005   &       0.10    &       7.00E-13        &       6.55E-13        &       2.44E-12        &       1.93E-11        &       4.47    &       1.44E-11        &       4.51    &       16.8   \\ \hline
$P^3$-RKDG      &       0.04    &       0.78    &       1.03E-05        &       5.47E-07        &       2.61E-06        &       1.46E-04        &               &       1.44E-04        &               &               \\
        &       0.02    &       0.39    &       8.60E-08        &       5.96E-07        &       6.31E-07        &       6.59E-06        &       4.47    &       8.85E-06        &       4.03    &               \\
        &       0.01    &       0.20    &       6.98E-10        &       4.32E-08        &       4.69E-08        &       4.10E-07        &       4.01    &       5.62E-07        &       3.98    &               \\
        &       0.005   &       0.10    &       7.88E-12        &       2.75E-09        &       3.00E-09        &       2.56E-08        &       4.00    &       3.69E-08        &       3.93    &       551.6  \\ \hline
$P^4$-RKDG      &       0.04    &       0.78    &       1.00E-07        &       2.13E-07        &       1.10E-07        &       8.02E-06        &               &       1.04E-05        &               &               \\
        &       0.02    &       0.39    &       8.53E-10        &       4.51E-09        &       3.44E-09        &       2.58E-07        &       4.96    &       3.36E-07        &       4.96    &               \\
        &       0.01    &       0.20    &       2.28E-11        &       6.41E-11        &       5.22E-11        &       8.46E-09        &       4.93    &       1.16E-08        &       4.86    &               \\
        &       0.005   &       0.10    &       2.58E-12        &       9.56E-11        &       2.16E-10        &       3.09E-10        &       4.78    &       2.39E-10        &       5.60    &       744.8  \\ \hline
\end{tabular}
\end{lrbox}
\scalebox{0.7}{\usebox{\tablebox}}
\end{table}

Tables \ref{tab:eg1} and \ref{tab:eg1-2} summarize the numerical
results at the final time $t=50$, where
$\text{err}_2$ and $\text{err}_\infty$ are respectively
the $l^2$ and $l^\infty$ errors at the final time,
$\mathcal{V}_Q$, $\mathcal{V}_E$, $\mathcal{V}_P$
measure respectively the variation of charge, energy and linear momentum
at the final time relative to the initial quantities,
and the CPU time of calculations
with the same mesh size is recorded for comparing the efficiency.
It can be observed clearly there that:
(1) while the {\tt CN}, {\tt CN0}, {\tt LCN0}, {\tt LCN1}, {\tt LCN2}, {\tt HS}, {\tt LF}, {\tt SI} and {\tt OS($2$)} schemes
are of the second-order accuracy, the {\tt OS($4$)}, $P^3$-RKDG and $P^4$-RKDG methods
exhibit at least the fourth-order accuracy;
(2) The {\tt CN}, {\tt CN0}, {\tt LCN0}, {\tt LCN2}, {\tt OS($2$)} and {\tt OS($4$)} schemes conserve
the discrete charge and only the {\tt CN} scheme conserves the discrete energy,
but none conserves the discrete linear momentum;
(3) The {\tt OS($4$)} scheme could also keep very accurately the discrete energy and linear momentum with
relatively fine meshes.
All above numerical results are consistent with the theoretical results given in Section~\ref{sec:discussion4fdm}.
Among the numerical methods of the second order accuracy,
it is also found that the {\tt OS($2$)} scheme runs fastest ($8.2$ seconds for the mesh $\tau=0.005$ and $h=0.20$)
if requiring to attain almost the same accuracy.
Similarly, the {\tt OS($4$)} scheme runs much more faster than both $P^3$-RKDG and $P^4$-RKDG methods,
and the ratio of the CPU time used by the {\tt OS($4$)} scheme over that used by the $P^3$-RKDG method is around $3.05\%$,
and reduces to around $2.26\%$ over that used by the $P^4$-RKDG method.

\begin{figure}[h]
\centering
\includegraphics[width=6.5cm]{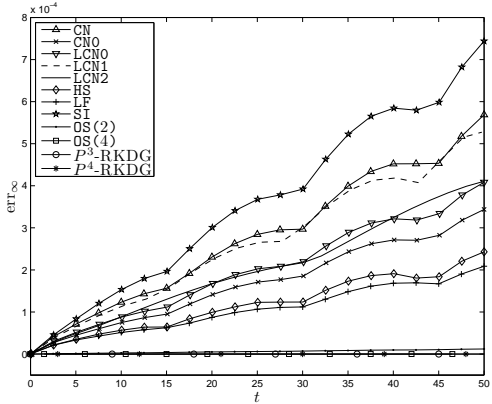}
\caption{Example~\ref{eg.1}. The $l^\infty$ error history.}
\label{fig.LinfErr}
\end{figure}

Fig.~\ref{fig.LinfErr} plots the $l^\infty$ error history
in the finest mesh used in Tables \ref{tab:eg1} and \ref{tab:eg1-2}.
According to the curves shown there,
it can be seen there that the $l^\infty$ error of all the schemes increases almost linearly with the time.
However, the slopes, obtained by the linear fitting, are different.
The smaller the slope is, the longer time the scheme could simulate to.
The {\tt SI} scheme has the largest slope $1.412\times 10^{-05}$ while the {\tt OS($2$)} scheme has
the smallest one $2.334\times 10^{-07}$ among all the second-order accurate methods.
Further, the slopes of the curves of $l^\infty$ errors for
the {\tt OS($4$)}, $P^3$-RKDG and $P^4$-RKDG schemes are almost the same
value of $3.199\times 10^{-13}$ which is much more smaller
than those of the second-order accurate schemes.

\begin{Remark}\rm
Both the theoretical and numerical comparison of the {\tt OS($2$)} scheme with the {\tt CN0} and {\tt LCN0} schemes
show that the former is better, especially in terms of efficiency and error growth.
Therefore in some sense this is an answer to the debate stimulated in \cite{Alvarez1992, FrutosSanz-Serna1989}
over twenty years ago on which one is most efficient among
the {\tt OS($2$)}, {\tt CN0} and {\tt LCN0} schemes.
\end{Remark}
\end{Example} 

\begin{Example}
\label{eg.os-dg}

\begin{figure}[h]
\centering
  \includegraphics[width=6.5cm]{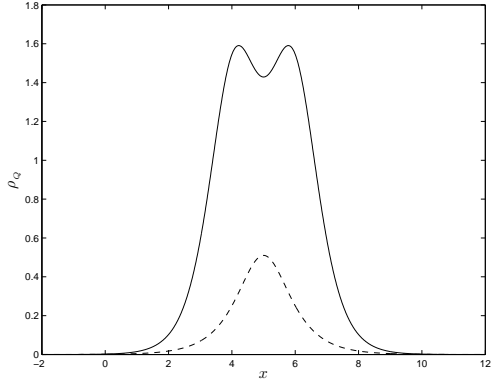}
  \caption{Example~\ref{eg.os-dg}. The initial charge density $\rho_{Q}(x,t)$ for two typical cases,
  $V =-0.2$, $x_0=5$.
  Case $1$ is shown in the solid line, a two-humped profile ($\omega = 0.3$) under the quadric scalar self-interaction ($k=1$, $s=0.5$ and $v=0$); Case $2$ is shown in the dashed line, a one-humped profile ($\omega = 0.75$)   under the cubic scalar and vector self-interaction ($k=2$ and $s=v=0.5$)}
\label{fig.os-dg-rhoq}
\end{figure}

The $P^3$-RKDG method
has been successfully applied before into investigating the interaction for the
NLD solitary waves under the quadric scalar self-interaction in
\cite{ShaoTang2005,ShaoTang2006,ShaoTang2008},
but the numerical comparison shown in Example~\ref{eg.1} tells us that
the proposed {\tt OS($4$)} scheme should be preferred now.
In this example, we further conduct numerical comparison among the {\tt OS($4$)},
$P^3$-RKDG and $P^4$-RKDG methods in simulating one-humped and two-humped solitary waves.
Two typical profiles of the charge density for the NLD solitary wave
displayed in Fig.~\ref{fig.os-dg-rhoq} are considered,
one denoted by Case $1$ has a two-humped profile under the quadric scalar self-interaction,
and the other denoted by Case $2$ has a one-humped profile under the cubic scalar and vector self-interaction.
These two solitary waves are located initially at $x_0=5$,
travel from right to left with the velocity $V =-0.2$ and stop at the final time $t=50$.
For convenience, we use $\rho_Q(x,t)\equiv J_0$ to
represent the charge density.

\begin{table}[h]
  \centering
  \caption{Example~\ref{eg.os-dg}. Numerical comparison among the {\tt OS($4$)},
$P^3$-RKDG and $P^4$-RKDG methods. The CPU time is measured in seconds. }
  \label{tab:os2-4th-vs-rkdg}
  Case 1 in the mesh of $\tau = 0.01$ and $h=\frac{100}{512}$
\begin{lrbox}{\tablebox}
  \begin{tabular}{ccccccc}
\hline\hline  &       $\mathcal{V}_{Q}$ &       $\mathcal{V}_{E}$ &       $\mathcal{V}_{P}$ &       $\text{err}_{2}$     &       $\text{err}_{\infty}$        &       Time (s)    \\ \hline
{\tt OS($4$)} &       1.96E-12       &         1.10E-12        &        3.45E-12        &       2.15E-09        &       2.17E-09        &       7.8    \\
$P^3$-RKDG      &       3.34E-08        &       2.27E-07        &       2.96E-07        &       2.89E-06        &       4.15E-06        &       146.6  \\
$P^4$-RKDG &       1.35E-12        &       9.43E-08        &       5.34E-08        &       9.02E-08        &       6.94E-08        &       195.5  \\
\hline
\end{tabular}
\end{lrbox}
\scalebox{0.7}{\usebox{\tablebox}}

Case 2 in the mesh of $\tau = 0.005$ and $h=\frac{100}{1024}$
\begin{lrbox}{\tablebox}
\begin{tabular}{ccccccc}
\hline\hline  &       $\mathcal{V}_{Q}$ &       $\mathcal{V}_{E}$ &       $\mathcal{V}_{P}$ &       $\text{err}_{2}$     &       $\text{err}_{\infty}$        &       Time (s)    \\ \hline
{\tt OS($4$)} &       9.82E-13        &       8.35E-13        &       2.95E-12        &       7.33E-11        &       8.63E-11        &       46.8   \\
$P^3$-RKDG      &       3.21E-10        &       9.04E-09        &       1.95E-08        &       2.00E-07        &       3.98E-07        &       881.1 \\
$P^4$-RKDG      &       9.10E-13        &       4.89E-09        &       1.57E-09        &       4.68E-09        &       6.35E-09        &       1156.5        \\
 \hline
  \end{tabular}
\end{lrbox}
\scalebox{0.7}{\usebox{\tablebox}}
\end{table}

The numerical comparison is shown in
Table~\ref{tab:os2-4th-vs-rkdg}, from which we can observe that,
(1) with the same mesh, no matter sparse ($\tau = 0.01$ and $h=\frac{100}{512}\approx0.1953$) or fine
($\tau = 0.005$ and $h=\frac{100}{1024}\approx0.0977$),
the {\tt OS($4$)} scheme is more conservative and higher accurate than both $P^3$-RKDG and $P^4$-RKDG  methods;
(2) the {\tt OS($4$)} scheme runs much faster than both $P^3$-RKDG and $P^4$-RKDG  methods as we have found in Table~\ref{tab:eg1-2}. Here,
the ratio of the CPU time used by the {\tt OS($4$)} scheme over that used by the $P^3$-RKDG method is around $5.32\%$,
and reduces to around $4.05\%$ over that used by the $P^4$-RKDG method for both cases.
The $l^\infty$ error history is plotted in Fig.~\ref{fig.os-dg},
which shows that those three methods all have almost zero slope (under $4.50$E-$09$).
This can be used to explain our previous success of the $P^3$-RKDG method in
\cite{ShaoTang2005,ShaoTang2006,ShaoTang2008}.
Further more, the more smaller errors of the {\tt OS($4$)} method mean that it should be more powerful
than others.

\begin{figure}[h]
\centering
  \includegraphics[width=6.5cm]{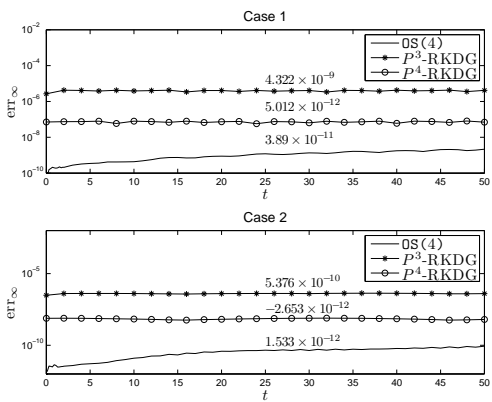}
\caption{Example~\ref{eg.os-dg}. The $l^\infty$ error history. The slopes are displayed above the curves.}
\label{fig.os-dg}
\end{figure}
\end{Example} 

It has been shown that the {\tt OS($4$)} scheme behaves best
for both one-humped and two-humped NLD solitary waves in long time simulations.
Therefore,
we conclude the comparison with the judgement that the {\tt OS($4$)} scheme is the most suitable for
simulating the interaction dynamics for the NLD solitary waves
in terms of the accuracy, the conservativeness, the efficiency and the error growth.
The {\tt OS($4$)} scheme will be utilized to investigate the binary collision of the NLD solitary waves.
The initial setup is the linear superposition of two moving waves
\begin{equation}
  \label{eq:two-mw-wave}
  \vPsi(x,t=0) = \vPsi_\text{l}^{\text{mw}}(x-x_{\text{l}},t=0)+\vPsi_\text{r}^{\text{mw}}(x-x_\text{r},t=0),
\end{equation}
where $\vPsi_{pos}^\text{mw}(x-x_{pos},t)$ denote the moving waves \eqref{mw} centered at $x_{pos}$ with the speed $V_{pos}$ and the frequency $\omega_{pos}$ for $pos \in\{\text{l},\text{r}\}$.
In the following examples, two equal solitary waves are placed symmetrically at $t=0$ with
$-x_\text{l}=x_\text{r}=10$ and $V_\text{l} = -V_\text{r} = 0.2$.
Several typical NLD solitary waves are considered with the parameters given in Table~\ref{3caselist},
and both quadric ($k=1$) and cubic ($k=2$) cases will be studied.
It should be noted that
the experiments carried out in the literatures
are all limited to the collisions of
the NLD solitary waves under the quadric scalar self-interaction.
A relatively fine mesh,  $\tau=0.005$ and $h=100/2^{13}\approx0.0122$, is adopted hereafter.

\begin{table}[h]
\caption{The initial setups of different cases in binary collisions.}
\label{3caselist}
\centering
\begin{tabular}{cccc|l}
\hline
case & $s$ & $v$ & $\omega_l=\omega_r$ & Remarks \\ \hline
B1   & 0.5 & 0   & 0.8                 & scalar, one-humped  \\
B2   & 0   & 0.5 & 0.8                 & vector, one-humped  \\
B3   & 0.5 & 0.5 & 0.8                 & scalar and vector, one-humped  \\
B4   & 0.5 & 0   & 0.3                 & scalar, two-humped  \\
B5   & 4   & 1   & 0.1                 & scalar and vector, two-humped   \\
\hline
\end{tabular}
\end{table}

\begin{Example}
\label{eg.scalar}

\begin{figure}[h]
  \centering
  \subfigure[$k=1$]{
  \includegraphics[width=6.5cm]{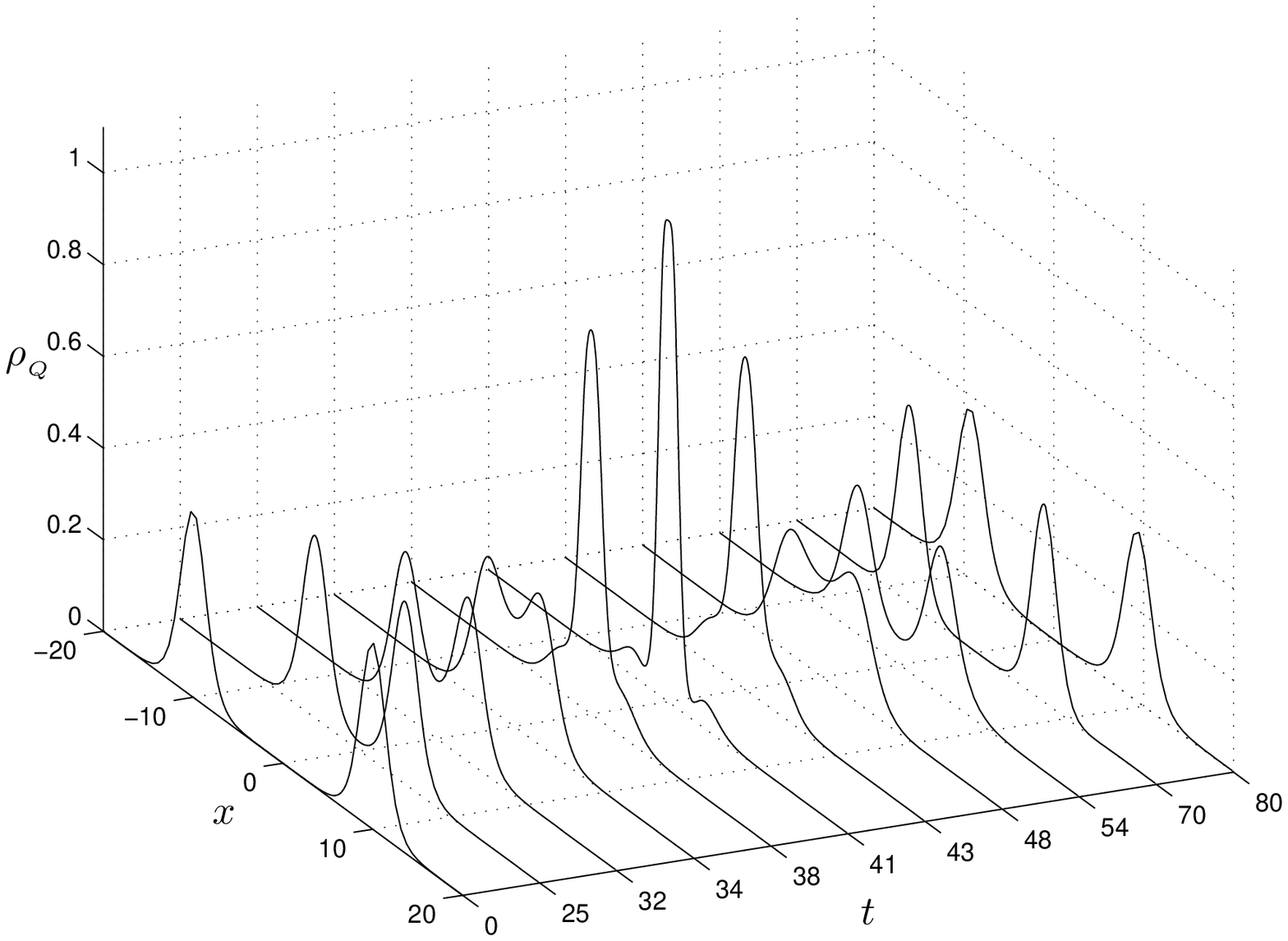}
  \label{subfigure:k1-s-rho}}
  \subfigure[$k=2$]{
  \label{subfigure:k1-s-QEP} 
  \includegraphics[width=6.5cm]{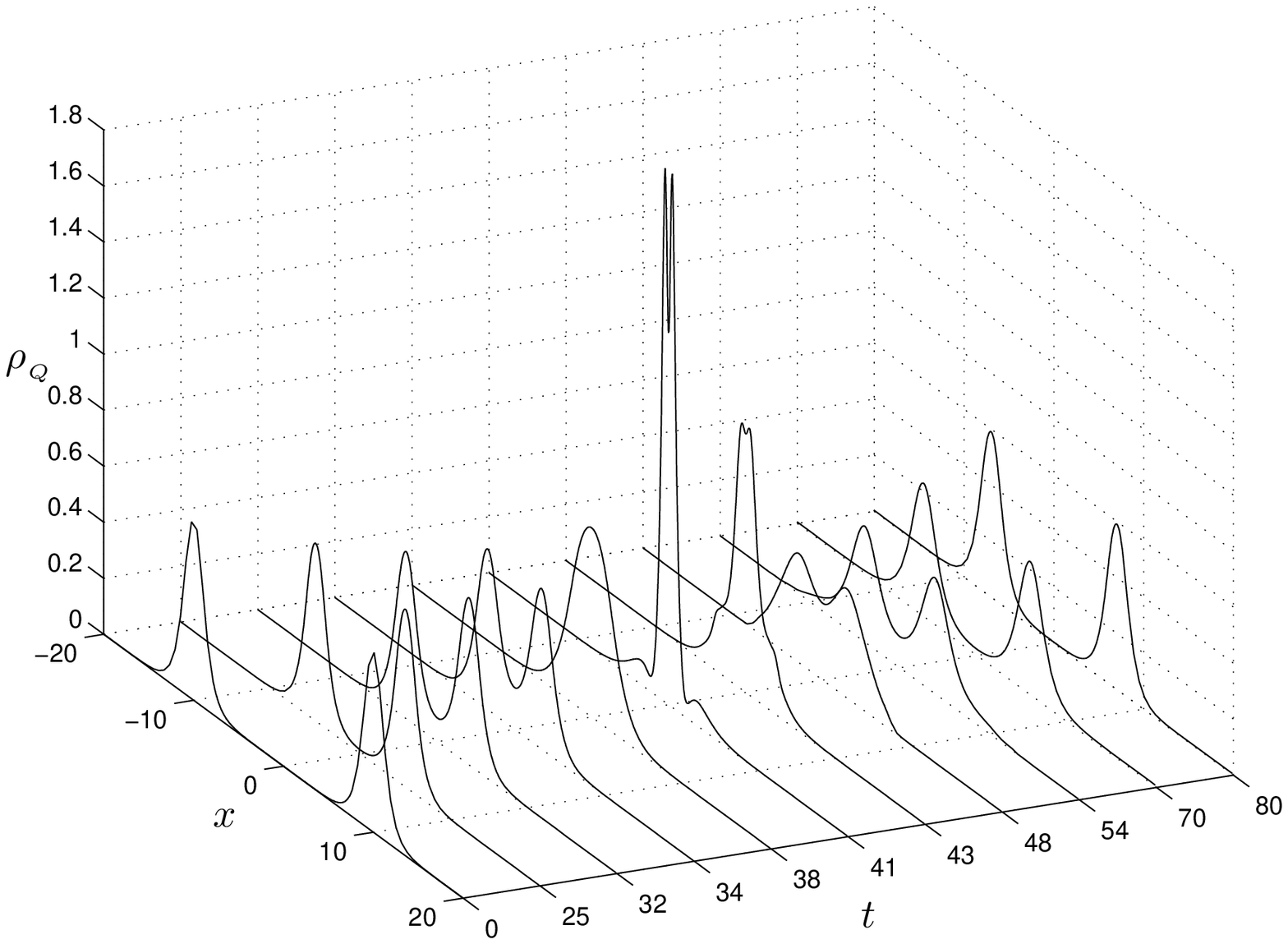}}
  \caption{Example \ref{eg.scalar}. Binary collision of the NLD solitary waves under the scalar self-interaction.}
  \label{fig:k1-s}
\end{figure}

The collision of two equal one-humped solitary waves under the scalar self-interaction,
\ie Case B1 in Table~\ref{3caselist}, is studied in this example. 
The interaction dynamics for the quadric case are shown in the left plot of Fig.~\ref{fig:k1-s},
where two equal waves with the initial amplitude of $0.4082$ move close at a velocity of $0.2$ and overlap each other,
then separate into a left moving wave and a right moving wave with the amplitude of $0.3743$ and the velocity of $0.1831$.
Similar phenomena are observed for the cubic case shown in the right plot of Fig.~\ref{fig:k1-s} except that
(1) two waves overlap more stronger around $t=41$ now due to the stronger nonlinearity;
(2) after collision, the amplitude decreases to $0.5899$ from the initial amplitude of $0.6455$ while
the velocity also decreases to $0.1037$.
In both cases, the discrete charge, energy and linear momentum are approximately conserved in the interaction
since the variation of them at $t=80$ is under $1.58$\text{E}-$10$.
\end{Example} 

\begin{Example}
\label{eg.v}

\begin{figure}[h]
  \centering
  \subfigure[$k=1$]{
  \includegraphics[width=6.5cm]{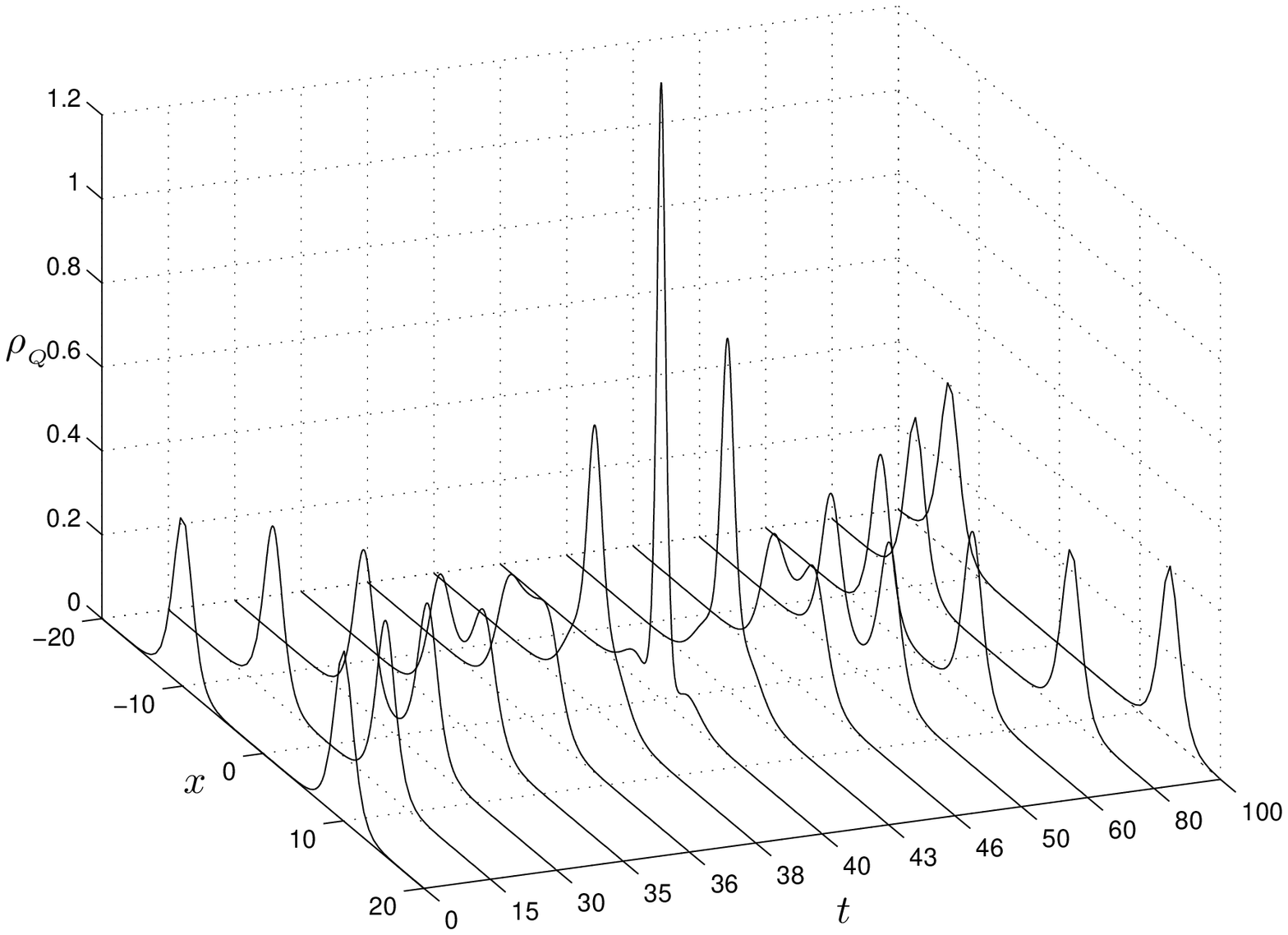}
  \label{subfigure:k1-v-rho}}
  \subfigure[$k=2$]{
  \label{subfigure:k1-v-QEP} 
  \includegraphics[width=6.5cm]{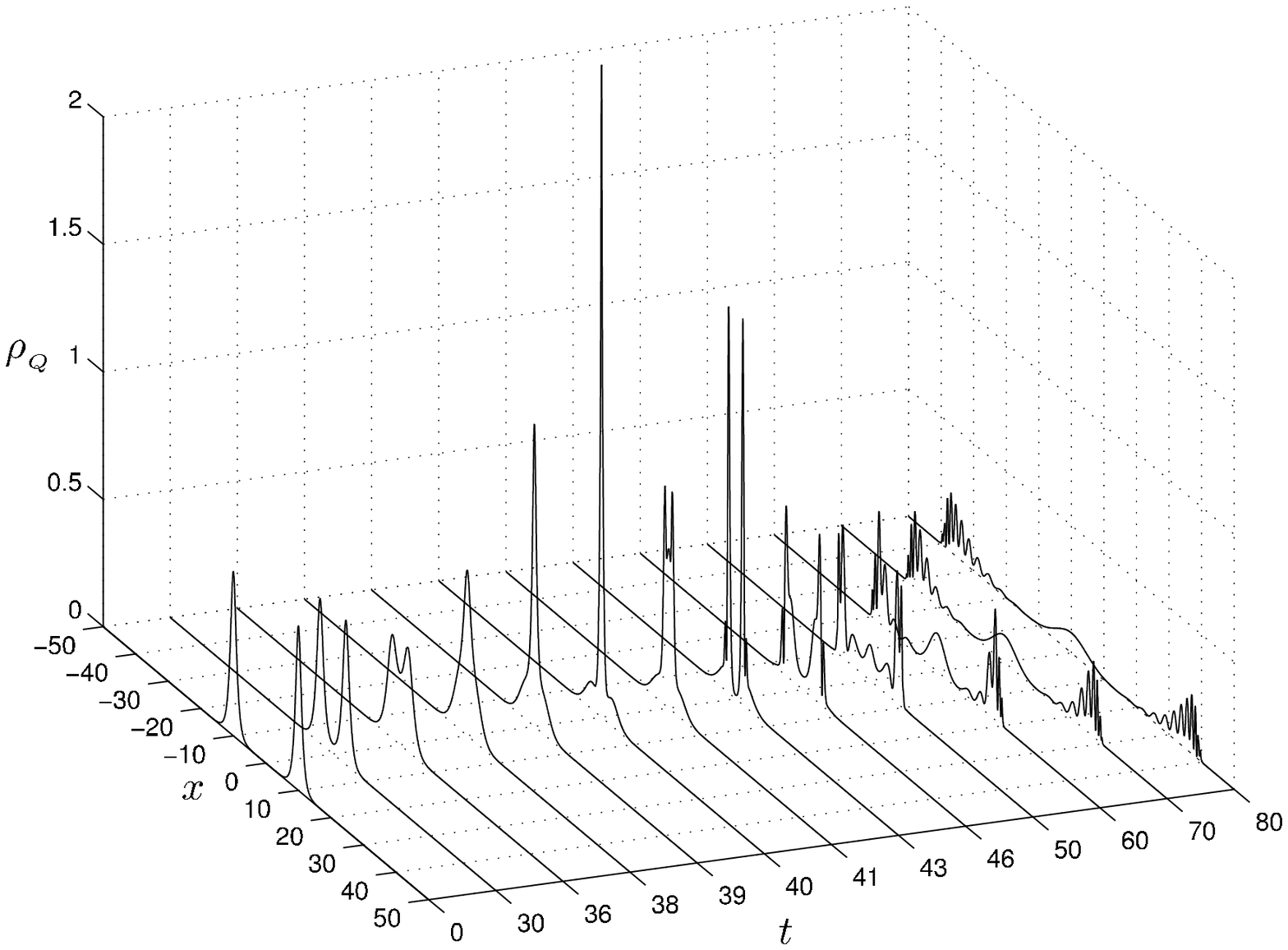}}
  \caption{Example \ref{eg.v}: Binary collision of the NLD solitary waves under the vector self-interaction.}
  \label{fig:k1-v}
\end{figure}

The collision of two equal one-humped solitary waves under the vector self-interaction,
\ie Case B2 in Table~\ref{3caselist}, is studied in this example. 
To the best of our knowledge, it is the first time to study binary collision of
the NLD solitary waves under the vector self-interaction.
The interaction dynamics for the quadric case are shown in the left plot of Fig.~\ref{fig:k1-v},
where the waves keep the shape and the velocity after the collision.
A totally different phenomenon appears for the cubic vector self-interaction as
displayed in the right plot of Fig.~\ref{fig:k1-v}.
The initial one-humped equal waves first merge into a single wave, then separate and overlap again.
Around $t=50$, collapse happens and highly oscillatory waves are generated and moving outside with a big velocity near $1$,
meanwhile a one-humped wave with small amplitude is formed at the center.
In both cases, the discrete charge, energy and linear momentum are approximately conserved in the interaction
since the variation of them at $t=100$ is under $5.41$\text{E}-$11$.
Note in passing that the collapse here is different from that shown in \cite{ShaoTang2005}.
It was reported there that the strong negative energy and radiation appear when
the collapse happens during the binary collision of two-humped waves.
\end{Example}

\begin{Example}
\label{eg.scalar-vector}
\begin{figure}[h]
  \centering
  \subfigure[$k=1$]{
  \includegraphics[width=6.5cm]{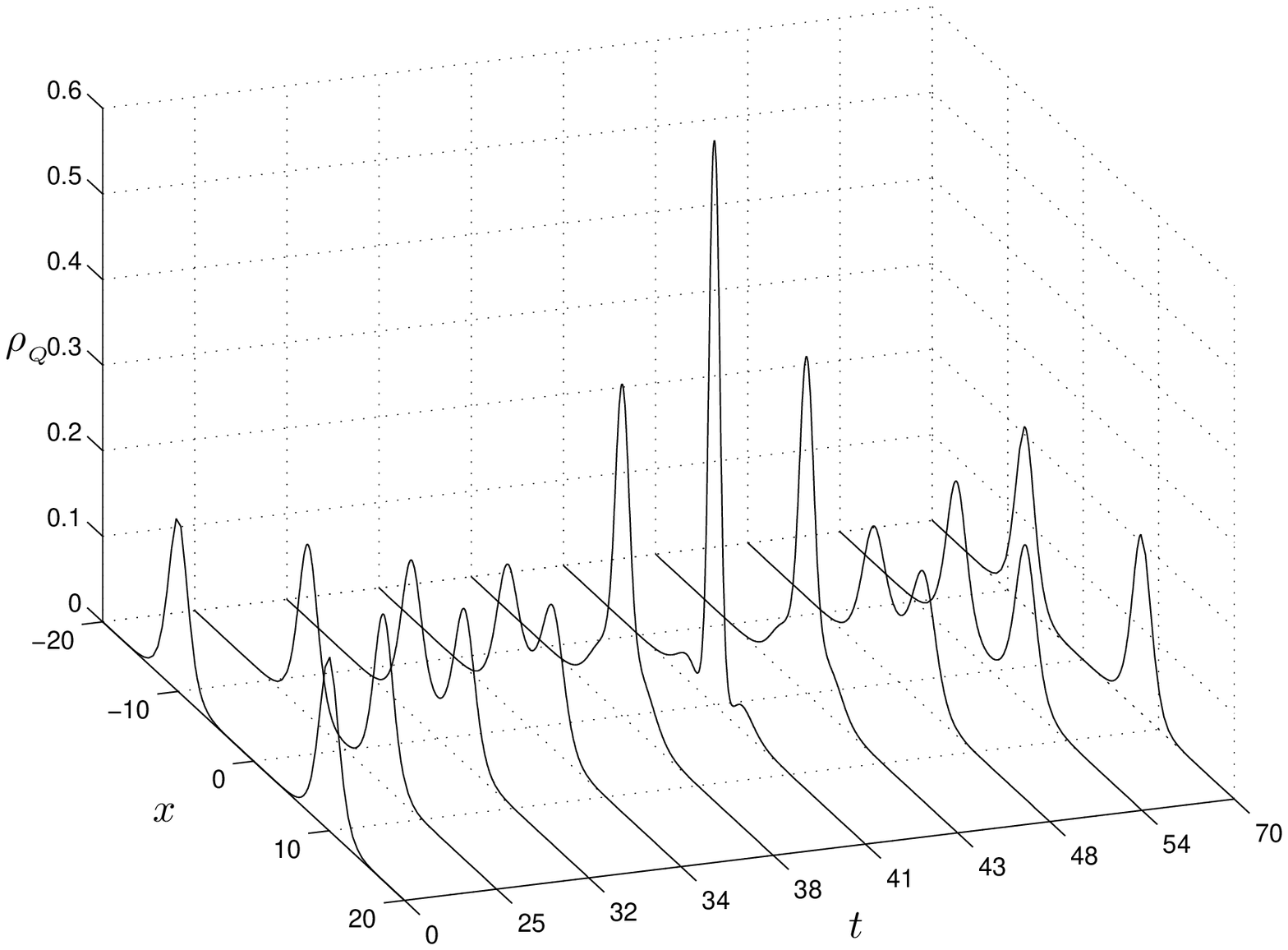}
  \label{subfigure:k1-sv-rho}}
  \subfigure[$k=2$]{
  \label{subfigure:k1-sv-QEP}
  \includegraphics[width=6.5cm]{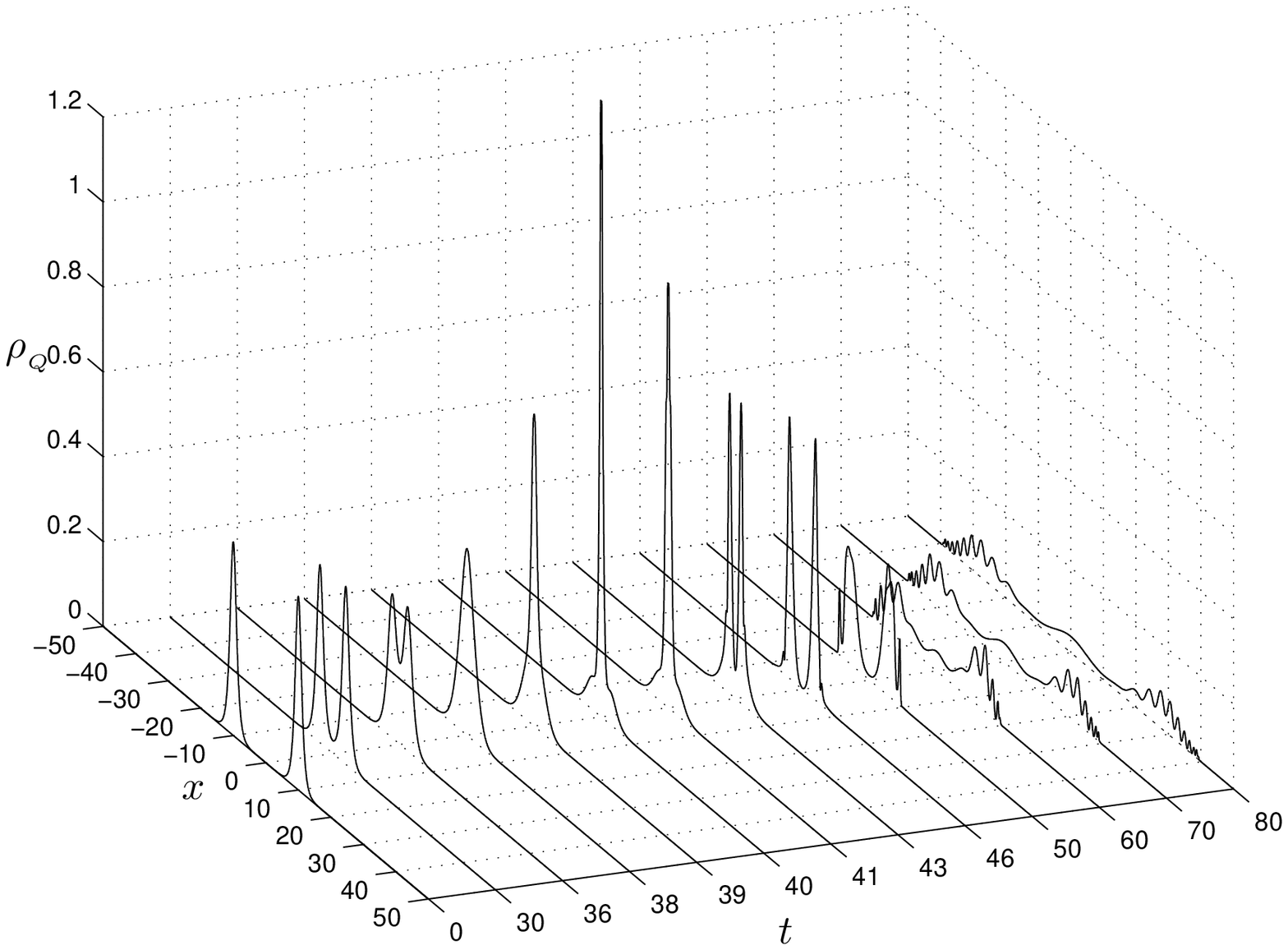}}
  \caption{Example \ref{eg.scalar-vector}: Binary collision of the NLD solitary waves under the scalar and vector self-interaction.}
  \label{fig:k1-sv}
\end{figure}

This example is devoted into investigating
for the first time the collision of two equal NLD solitary waves under the scalar and vector self-interaction,
\ie Case B3 in Table~\ref{3caselist}.
The interaction dynamics for the quadric case are shown in the left plot of Fig.~\ref{fig:k1-sv},
where two equal waves with the initial amplitude of $0.2041$ move close at a velocity of $0.2$ and overlap each other, 
then separate into a left moving wave and a right moving wave with the amplitude of $0.2091$ and the velocity of $0.1968$.
The collapse similar to that shown in right plot of Fig.~\ref{fig:k1-v} happens again for the
cubic vector self-interaction, see the right plot of Fig.~\ref{fig:k1-sv}.
The initial one-humped equal waves first merge into a single wave at $t=38$, then separate and overlap again.
Around $t=50$, collapse happens and highly oscillatory waves are generated and moving outside with a big velocity near $1$.
In both cases, the discrete charge, energy and linear momentum are approximately conserved in the interaction
since the variation of them at $t=80$ is under $3.53$\text{E}-$10$.
\end{Example}

\begin{Example}
\label{eg.scalar-2humped}

\begin{figure}[h]
  \centering
  \subfigure[$k=1$]{
  \includegraphics[width=6.5cm]{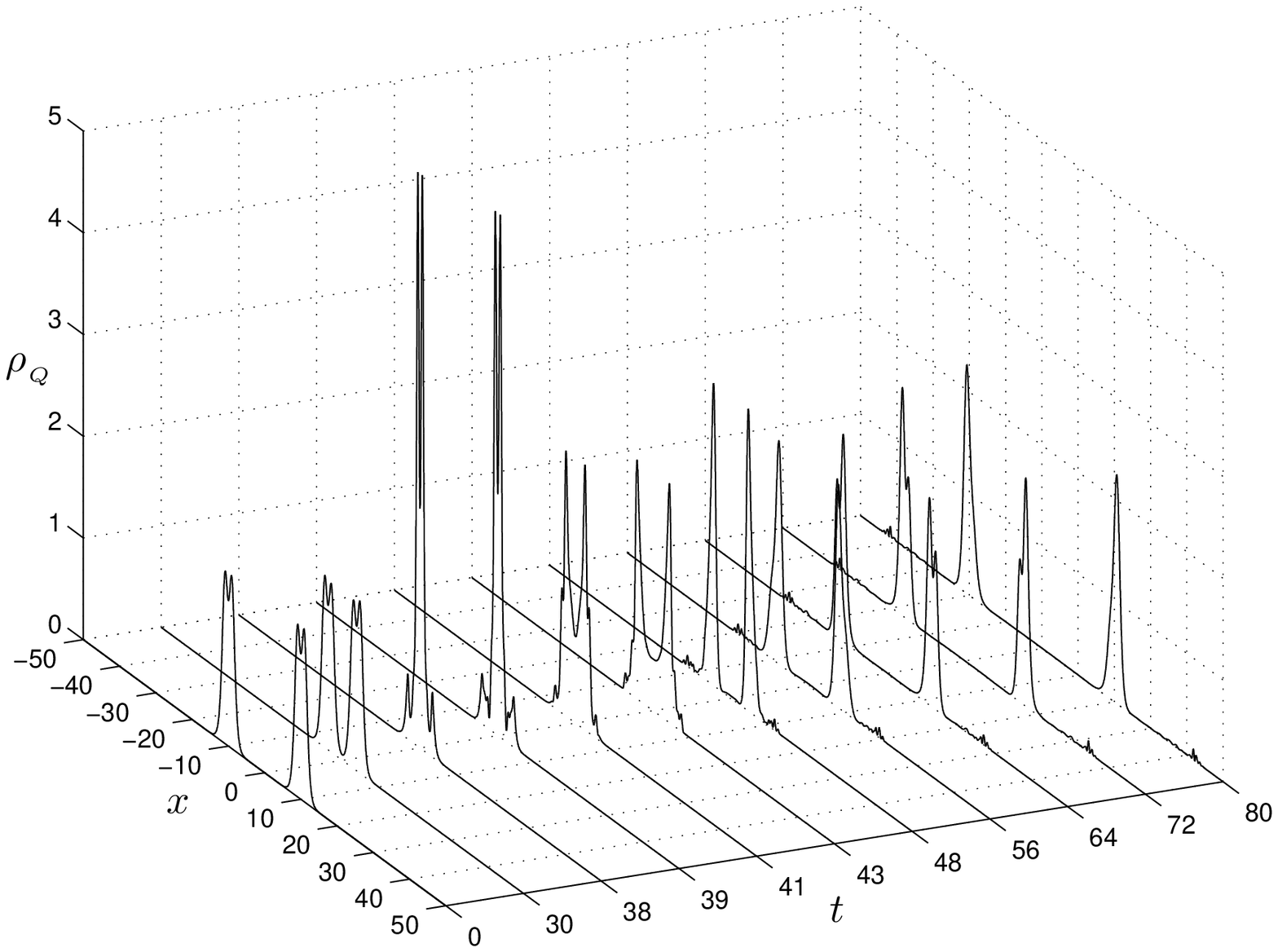}
    }
  \subfigure[$k=2$]{
    \includegraphics[width=6.5cm]{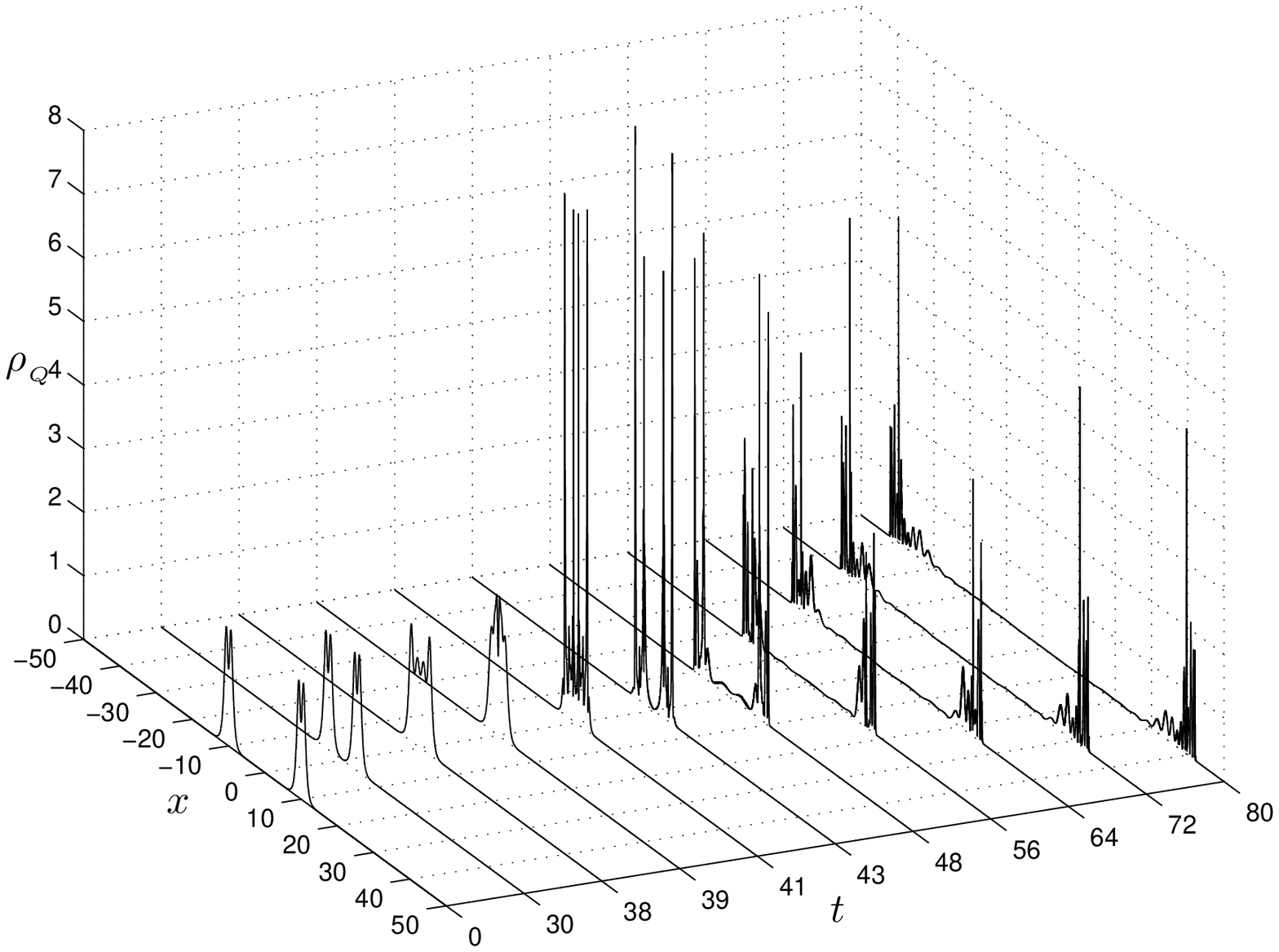}}
\caption{Example~\ref{eg.scalar-2humped}: Binary collision of the two-humped
NLD solitary waves under the scalar self-interaction.}
  \label{fig:k1-2humped}
\end{figure}

\begin{figure}[h]
  \centering
  \subfigure[$k=1$]{
  \includegraphics[width=6.5cm]{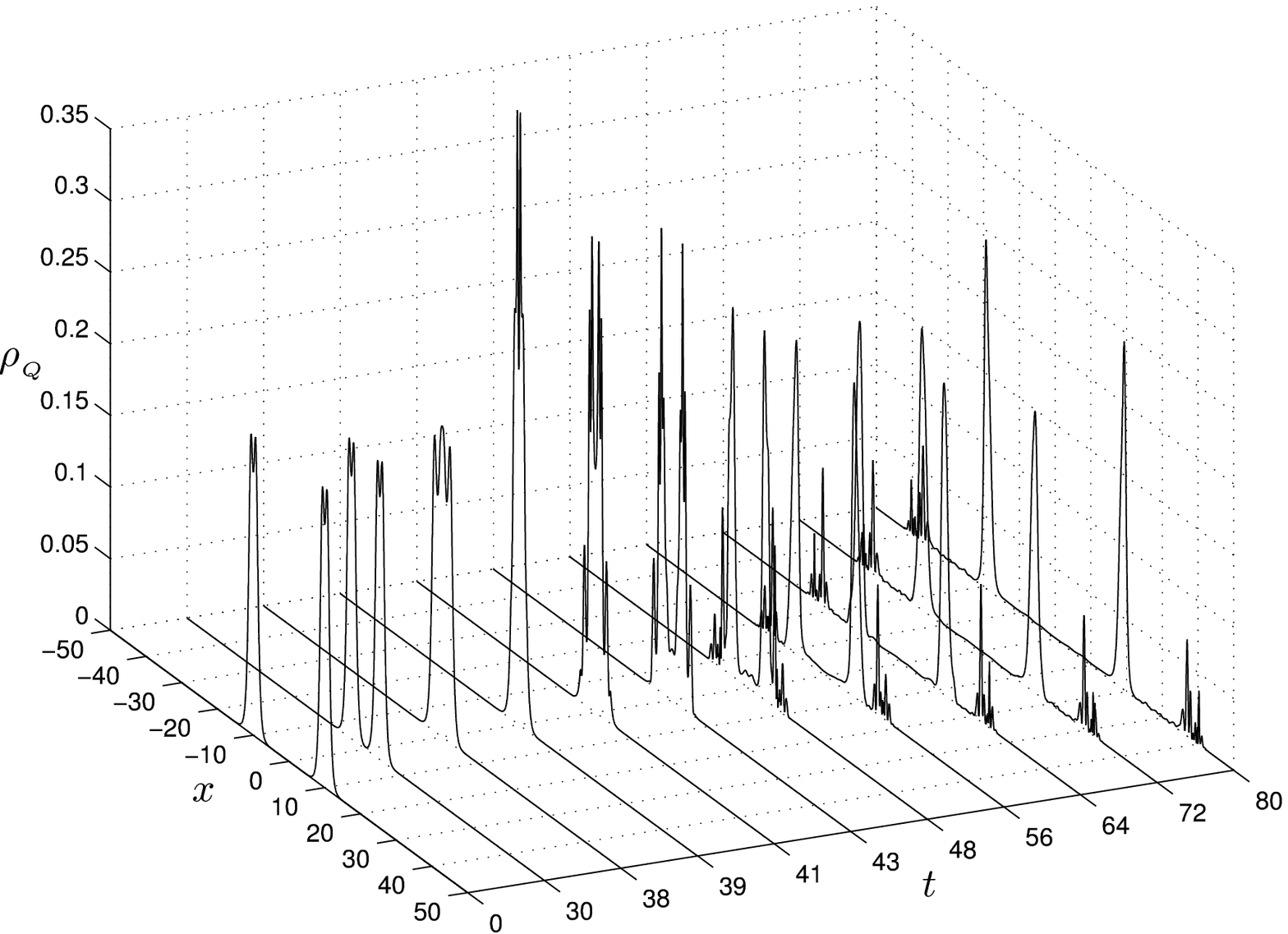}
  \label{subfigure:k1-s-rho-2humped}}
  \subfigure[$k=2$]{
  \label{subfigure:k1-s-QEP-2humped}
  \includegraphics[width=6.5cm]{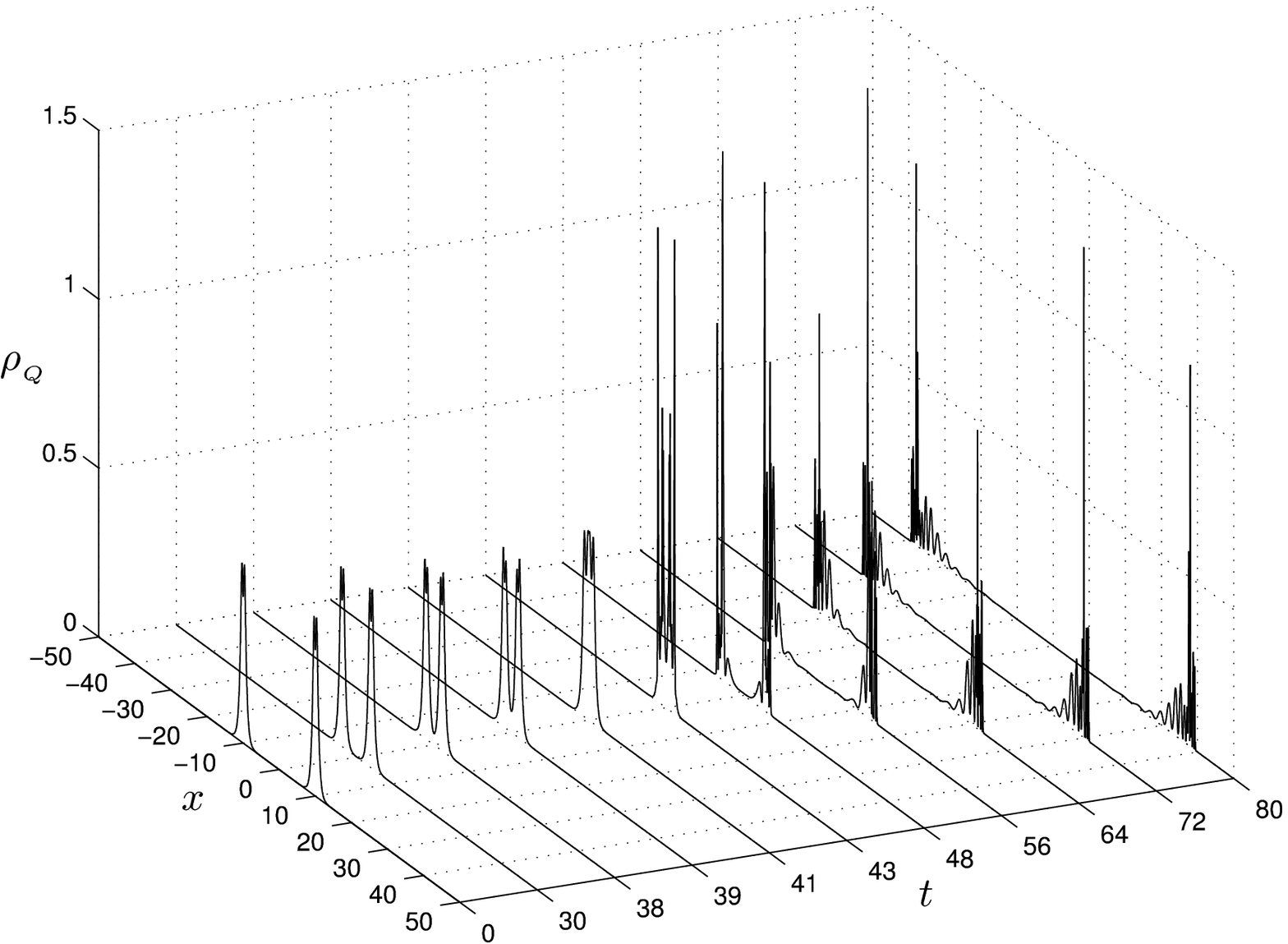}}
\caption{Example~\ref{eg.scalar-2humped}: Binary collision of the two-humped
NLD solitary waves under the scalar and vector self-interaction.}
  \label{fig:k1-2humped-sv}
\end{figure}

As reported before in \cite{ShaoTang2005,ShaoTang2008},
collapse happens in binary and ternary collisions
of the NLD solitary waves under the quadric scalar self-interaction if
the two-humped waves are evolved. In this example,
we will show further that collapse could happen in binary collision of
equal two-humped waves under the cubic scalar self-interaction and under the
linear combination of scalar and vector self-interactions.
First, Case B4 in Table~\ref{3caselist} is studied
and the interaction dynamics are shown in Fig.~\ref{fig:k1-2humped},
which clearly shows that (1) collapse happens in both quadric and cubic cases but is more stronger in the latter;
(2) two initial waves
at the same velocity are decomposed into  groups with different velocities after the collision,
but there is no such decomposition for the cubic case.
In the left plot of Fig.~\ref{fig:k1-2humped}, the highly oscillating waves with small amplitude move outside
at a big velocity of $0.9644$, while the one-humped waves with big amplitude follow them at a small velocity of $0.4626$.
In both cases, the discrete charge, energy and linear momentum are approximately conserved in the interaction
since the variation of them at $t=80$ is under $1.01$\text{E}-$5$.
Second, binary collision of equal two-humped solitary waves under the scalar and vector self-interaction,
\ie Case B5 in Table~\ref{3caselist}, is plotted in Fig.~\ref{fig:k1-2humped-sv}.
The phenomena are very similar to that shown in Fig.~\ref{fig:k1-2humped},
and the ``decomposition" phenomenon for the quadric case is more obvious
than that shown in the left plot of Fig.~\ref{fig:k1-2humped}.
\end{Example}

\section{Conclusion and outlook}
\label{sec:conclusion}

Several numerical methods for solving the NLD equation
with the scalar and vector self-interaction
have been presented and compared theoretically and numerically.
Our results have revealed that among them, the {\tt OS($4$)} scheme, one of the fourth-order accurate OS methods,
performs best in terms of the accuracy and the efficiency.
Particularly, the {\tt OS($4$)} scheme
is usually more accurate than the $P^4$-RKDG method in the mesh of the same size,
but the former needs much more less computational cost than the latter.
Such superior performance of the OS methods
is credited to the full use of the local conservation laws of the NLD equation
such that the nonlinear subproblems resulted from them are exactly solved.
The interaction dynamics for the NLD solitary waves
under the quadric and cubic self-interaction have been investigated
with the {\tt OS($4$)} scheme.
We have found that
such interaction dynamics depend on the exponent power of the self-interaction.
Actually,
it has been observed for the first time in our numerical experiments that,
(1) collapse happens in collision of two equal one-humped
NLD solitary waves under the cubic vector self-interaction
but such collapse does not appear for corresponding quadric case;
(2) two initial waves at the same velocity
are decomposed into groups with different velocities after
collapse in binary collision of two-humped NLD solitary
waves under the quadric scalar self-interaction or under the quadric scalar and vector self-interaction
but such phenomenon does not show up for corresponding cubic case.
More efforts on the interaction dynamics for the NLD solitary waves under more general self-interaction
with the {\tt OS($4$)} method are still going on.

\section*{Acknowledgments}
{Sihong Shao was partially supported by the National
Natural Science Foundation of China (Project No. 11101011)
and the Specialized Research Fund for the Doctoral Program of Higher Education (Project No. 20110001120112).
Huazhong Tang was partially supported by the National Natural Science Foundation of China (Project No. 10925101).
The authors would also like to thank the referees for many useful suggestions.}

\end{document}